\newtheorem{Theorem}{Theorem}
\newtheorem{Lemma}{Lemma}
\begin{document}
\title{Entanglement Detection Beyond Local Bound with Coarsely  Calibrated measurements}

\author{Liang-Liang Sun}\email{sun18@ustc.edu.cn}
\affiliation{Department of Modern Physics and National Laboratory for Physical Sciences at Microscale, University of Science and Technology of China, Hefei, Anhui 230026, China}

\author{Yong-Shun Song}
\affiliation{School of Information Engineering, Changzhou Vocational Institute of Industry Technology, Changzhou 213164, China}

\author{Sixia Yu }
\email{yusixia@ustc.edu.cn} 
\affiliation{Department of Modern Physics and National Laboratory for Physical Sciences at Microscale, University of Science and Technology of China, Hefei, Anhui 230026, China}
\affiliation{Hefei National Laboratory, University of Science and Technology of
China, Hefei 230088, China}

\date{\today{}}
\begin{abstract}


Bell's test, initially devised to distinguish quantum theory from local hidden variable models through {violations of local bounds}, is also a common tool for detecting entanglement. For this purpose, one can assume the quantum description of devices and use available information to strengthen the bound for separable states, which may go beyond the local bound, enabling more efficient entanglement detection.
Here we present a systematic approach for strengthening Bell inequalities for qubit systems, using Mermin-Klyshko-Bell inequalities as examples, by considering measurement devices that are coarsely calibrated only by their ability to generate nonlocal correlations without requiring precise quantum characterization.
In the case of bipartite and tripartite systems, we derive trade-offs between upper bounds for separable states and general states in terms of structure functions to optimize the entanglement detection. We then strengthen $n$-partite Bell inequalities for the detection of states exhibiting a diversity of entanglement structures such as genuine multipartite entanglement. For general Bell scenarios with some measurements characterized, we demonstrate that entanglement can also be detected  with some local correlations by exploiting the Navascu\'{e}s-Pironio-Ac\'{i}n hierarchy of tests.

\end{abstract}

\pacs{98.80.-k, 98.70.Vc}

\maketitle
\section{I. Introduction}
Quantum entanglement is a resource accounting for quantum advantage in many quantum information tasks, such as quantum computation and communication~\cite{Briegel2009} and metrology~\cite{PhysRevLett.67.661, Giovannetti2004}. One important issue in quantum information science is thus to detect quantum entanglement carried by a  quantum system, for which,  Bell's test~\cite{RevModPhys.86.419} is one of the most frequently used tools. Correlations violating Bell's inequalities can {firmly} verify the presence of entanglement.  {Furthermore}, in a multipartite system, the degree of Bell's violation can also indicate how many particles are genuinely entangled with each other, thus witnessing entanglement depth~\cite{PhysRevLett.86.4431, PhysRevLett.112.155304, PhysRevLett.90.080401}.

Standard Bell's inequalities, initially designed for the purpose of testing quantum theory against local hidden variables models~\cite{RevModPhys.86.419, GUHNE20091}, are formulated in terms of  experimental statistics and without involving device descriptions. For  entanglement detection, they  have inherent limitations. A single Bell inequality is typically violated by only a specific class of states. Furthermore, there are entangled states that do not demonstrate non-locality~\cite{PhysRevA.40.4277} and, thus, cannot be detected by any Bell test.
To overcome these difficulties, one can employ a quantum description of the systems and use available information to strengthen the bounds for separable states~\cite{PhysRevLett.98.140402, Uffink2008, PhysRevA.80.034302}. Assuming projective and orthogonal measurements, the bound of the Clauser-Horne-Shimony-Holt inequality (CHSH)~\cite{PhysRevLett.23.880} for separable states can be strengthened from the local bound of $2$ to $\sqrt{2}$, leading to more efficient detection and providing a quantitative statement on entanglement~\cite{PhysRevA.80.034302}. When only partial measurements are characterized, one approach based on the cross-norm~\cite{Rudolph2005} or realignment criterion~\cite{article} is exploited  in Ref.\cite{Moroder2012}. Another fruitful {approach} is generalizing the Bell scenario to a measurement-device-independent Bell scenario via introducing characterized sources on each local side ~\cite{PhysRevLett.118.150505}, which can, in principle, detect all entangled states. 
{While} these generalizations have significantly advanced the theory of quantum entanglement, a key practical open question remains: how to strengthen Bell's inequality using general measurements for multipartite qubit systems--the standard platform for quantum information science, for which,  previous considerations  mainly focused on the minimal Bell's scenario using  projective measurements~\cite{PhysRevLett.23.880, PhysRevA.80.034302}.

In this paper,  we tackle this issue by  {tightening Bell inequalities for multipartite qubit systems in two experiment settings.  In the first one, we consider measurements that are   coarsely calibrated with capability of generating nonlocal correlation,  which are  referred to as nonlocal-correlation-generating (NLCG) measurements}. In this setting,   we first consider bipartite and tripartite Bell's inequalities and derive trade-offs between upper bounds for local and general states. These trade-offs enable one to verify entanglement without requiring precise characterization of them except that they are NLCG, namely, capable of generating nonlocal correlations. We then consider $n$-partite Bell's inequalities due to Mermin-Klyshko (MK) \cite{PhysRevLett.65.1838, KLYSHKO1993399, PhysRevLett.88.230406}  and provide a systematic approach to derive analytical bounds for  states with various entanglement structures. {In the second experiment setting, we }  consider more general Bell's scenarios and adapt the Navascu\'{e}s-Pironio-Ac\'{i}n (NPA) hierarchy ~\cite{PhysRevLett.98.010401} for our purposes, with some measurements are characterized.


\section{II. strengthen  Bell tests for separable states. }

 {In this  section, we establish that, for  nonlocal correlation generating (NLCG) measurements, the bound of  the Bell inequality for a separable state $\varrho$  could be smaller than the classical bound. We focus on the $(n, 2, 2)$ Bell scenario, in which an $n$-partite state $\rho$ is distributed among $n$ spacelike-separated observers, on the received state the $m$-th  observer can randomly perform  one of the two local binary measurements $A_{s_m}$, with  $s_{m} \in \{0,  1\}$.  A Bell inequality can then be formulated in terms of expectation values of these jointly measured local observables.   }
$$\beta= \sum_{{\bf s}}\alpha_{\bf s} \overline {\otimes_{s_{m}}A_{s_m} }\leq L,$$
 where  ${\bf s}\equiv \{s_{1}, \cdots, s_{n}\}$ specifies the collection of  local  measurement settings and $\overline {\otimes_{s_{m}}A_{s_m} }  $ specifies the expectation value of jointly measured observables  $\{A_{s_{1}}, \cdots,  A_{s_{n}}\}$ and  $\alpha_{\bf  s}$ specifies the relevant coefficient   and $L$ is the classical bound of Bell's inequality.

{\bf  Definition}{\emph{ NLCG measurement}:   {A set of measurements $\Omega'$ is called NLCG measurements if,  for some state $\rho$, the expectation value of the relevant  Bell operator   $\hat{\beta}(\Omega') = \sum_{\bf s} \alpha_{\bf s} \otimes_{s_{m}} \hat{A}_{s_m}$   can be  larger than the classical bound   $\beta_{\rho}(\Omega') > L$ where    $\beta_{\rho}(\Omega') \equiv {\rm Tr}[\rho \hat{\beta}(\Omega')]$}.

 {To tighten the bound of Bell's inequality for separable states under general NLCG  measurements denoted  by  $\Omega'$, we proceed in two steps. First, considering  the set of rank one and  projective measurements $\Omega$, we derive the upper bounds of  Bell inequality for an entangled state and a general separable state $\varrho$, denoted $\mathrm{U}(\Omega)$ and $\mathrm{U}_{\varrho}(\Omega)$ respectively, and establish the trade-off  $\mathrm{U}_{\varrho}(\Omega) \leq f\left(\mathrm{U}(\Omega)\right) \leq L$, where $f(\cdot)$ is the structure function  defined as 
$$f(v) \equiv \max_{\{ \Omega\mid  {\rm U}(\Omega) = v\}} {\rm U}_{\varrho}(\Omega).$$
 Second, we consider Bell's operator employing general  measurements $\Omega'$ and decompose a general measurement operator   as $\hat{A}_{s_m} = r_{0|s_{m}} \vec{a}_{s_{m}} \cdot \vec{\sigma} + r^{*}_{s_{m}} \openone = \sum_{x_{m}=0, \pm} r_{x_{m}|s_{m}} \hat{A}_{x_{m}|s_{m}}$, with  $\vec{a}_{s_{m}}$ is unit vector and $\vec{\sigma}$ the Pauli vector, with  $\hat{A}_{0|s_{m}}=\vec{a}_{s_{m}} \cdot \vec{\sigma} $ and   $\hat{A}_{\pm|s_{m}}=\pm \openone$, and  coefficients $r_{\pm|s_{m}} \equiv \left(1 - r_{0|s_{m}} \pm r^{*}_{s_{m}}\right)/2$  and  $\sum_{x_{m}=0, \pm} r_{x_{m}|s_{m}}=1$  with  $0 \leq r_{x_{m}|s_{m}}$. Thus,  the Bell operator   can be decomposed into those involving projective measurements and identity operators  as
$$\hat{\beta} (\Omega')=t_{0}\hat{\beta}(\Omega_{0})+\sum_{i\neq 0}t_{i}\hat{\beta}(\Omega^{*}_{i})$$
where  $\Omega_{0}$  and $\Omega^{*}_{i}$ denote  the set of measurements  $ \{(\hat{A}_{x_{1}|0}, \hat{A}_{x'_{1}|1}); \cdots;   (\hat{A}_{x_{n}|0}, \hat{A}_{x'_{n}|1})\}$,  with  $(\hat{A}_{x_{m}|0}, \hat{A}_{x'_{m}|1} )$ are  the decomposition elements  of measurements  on $m$-th side.  The  subscript   $i=\{(x_{1}, x'_{1}); \cdots; (x_{n}, x'_{n})\}$  specifies  the involved  elements in $\Omega_{0}$ and $\Omega^{*}_{i}$, and  the relevant  Bell's operator has a weight as   $t_{i}=\Pi_{m} (r_{x_{m}|0} \cdot r_{x'_{m}|1})$.  Note that    $\hat{A}^{(n)}_{0|\nu}= \vec{a}^{(n)}_{\nu}\cdot \vec{\sigma}$  can be seen as a projection measurement,    by  $\Omega_{0}$ we specify measurement set where  all the measurements  involved are  rank one and projective, and  $\Omega^*_{i}$  specifies the measurement set contains  identity operators. We have 
 \begin{eqnarray}
\beta_{\varrho}(\Omega')&=&t_{0}{\beta}_{\varrho}(\Omega_{0})+\sum_{i\neq 0}t_{i}{\beta}_{\varrho}(\Omega^{*}_{i})\nonumber \\
&\le & t_{0}{\rm U}_{\varrho}(\Omega_{0})+\sum_{i\neq 0}t_{i}{\rm U}_{\varrho}(\Omega^{*}_{i}). \nonumber\\
&\leq  &  t_{0}{\rm U}_{\varrho}(\Omega_{0})+\sum_{i\neq 0}t_{i}\max_{\pm}{\rm U}_{\varrho}(\Omega_{i, \{\pm\}}) \nonumber  \\
&\leq  &  t_{0}f({\rm U}(\Omega_{0}))+\sum_{i\neq 0}t_{i}f(\max_{\pm}{\rm U}(\Omega_{i, \{\pm\}}))\nonumber \\
 &\leq & f[ t_{0}{\rm U}(\Omega_{0})+\sum_{i\neq 0}t_{i}\max_{\pm}{\rm U}(\Omega_{i, \{\pm\}})] \nonumber \\
& \leq & f(\beta_{\rho}). \label{basic}
 \end{eqnarray}  
 In the second inequality, we note that $\Omega^*_{i}$  contains, for instance,  $\hat{A}_{s_{m}} = \openone$ and rank one projection measurement $\hat{A}_{1-{s}_{m}}$. These two measurements commute with each other. To compute bounds  ${\rm U}(\Omega^{*}_{i})$ and ${\rm U}_{\varrho}(\Omega^{*}_{i}) $, we can replace $\hat{A}_{s_{m}}$ with  $\hat{A}_{s_{m}} = \pm \hat{A}_{1-{s}_m}$. This leads to  sets of measurement settings denoted $\Omega_{i, \{\pm\}}$.  We observe that (see A.1 in supplemental material (SM) in  \cite{supplemental})   $${\rm U}_{\varrho}(\Omega^{*}_{i}) \leq \max_{\pm} {\rm U}_{\varrho}(\Omega_{i, \{\pm\}})$$ In the third inequality of Eq.\ref{basic},  we have utilized the trade-off  between ${\rm U}_{\varrho}(\Omega)$ and ${\rm U}(\Omega)$, namely ${\rm U}_{\varrho}(\Omega) \leq f({\rm U}(\Omega))$, where the structure function $f$ is assumed to be monotonically decreasing and concave. It is also assumed to  hold 
 $$\max_{\pm} {\rm U}_{\varrho}(\Omega_{i, \{\pm\}}) \leq f\left( \max_{\pm} {\rm U}(\Omega_{i, \{\pm\}}) \right).$$  In the following examples,  this inequality is ensured by the fact that  these upper bounds are independent on the sign $\{\pm\}$.  The  last inequality is due to that      $\beta_{\rho}(\Omega')\leq  t_{0}{\rm U}(\Omega_{0})+\sum_{i\neq 0}t_{i}\max_{\pm}{\rm U}(\Omega_{i, \{\pm\}})$ and  $f$ is  a decreasing function. }

\subsection{ A. Strengthening Bell's test in bipartite systems }

 We begin with the minimal Bell scenario, where there are two parties with each party performing one of two binary measurements, \( \hat{A}_{\nu} \) and \( \hat{B}_{\mu} \), with \(\nu, \mu \in \{0, 1\}\). The relevant Bell inequality, namely, CHSH inequality that reads 
$$ {A_{0}B_{1}} + {A_{1}B_{0}} + {A_{0}B_{0}} - {A_{1}B_{1}}\leq 2 ,
$$
with \( {A_{\nu} B_{\mu}} = \mathrm{Tr}(\rho \hat{A}_{\nu} \hat{B}_{\mu}) \) denoting the expectation value.

 \begin{Theorem}
In the minimal Bell scenario, if the NLCG  measurements  that can generate  correlations  with a Bell value \( \beta_{\rho} > 2 \) on some state $\rho$, generate a correlation with the Bell value 
\begin{equation}
\beta_{\rho'} \geq \textstyle \sqrt{2 + 2\sqrt{1 - \left(\frac{\beta^2_{\rho}}{4} - 1\right)^2}} 
\end{equation} 
then the underlying two-qubit state \( \rho' \) can be certified to be entangled. 
 \end{Theorem}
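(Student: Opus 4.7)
The plan is a two-stage argument that converts the information obtained from the first state $\rho$ into a tightened separability threshold applied to the second state $\rho'$, exploiting the fact that the same qubit measurements $\hat A_\nu,\hat B_\mu$ are applied in both runs. I model the dichotomic qubit observables as $\hat A_\nu=\vec a_\nu\cdot\vec\sigma$ and $\hat B_\mu=\vec b_\mu\cdot\vec\sigma$ with unit Bloch vectors, denoting by $\theta_A$ and $\theta_B$ the angles between $\vec a_0,\vec a_1$ and between $\vec b_0,\vec b_1$ respectively. The only scalar that needs to be tracked throughout is the product $\sin\theta_A\sin\theta_B$.

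The first stage uses the Landau operator identity $\hat\beta^{2}=4\,\mathbb 1\otimes\mathbb 1-[\hat A_0,\hat A_1]\otimes[\hat B_0,\hat B_1]$ together with the variance bound $\beta_\rho^{2}\le\langle\hat\beta^{2}\rangle_\rho$ and the operator-norm estimate $|\langle X\otimes Y\rangle_\rho|\le\|X\|\|Y\|$. For qubit observables $\|[\hat A_0,\hat A_1]\|=2\sin\theta_A$ (and likewise on Bob's side), giving $\beta_\rho^{2}\le 4+4\sin\theta_A\sin\theta_B$. Invoking the hypothesis $\beta_\rho>2$ then yields the key inequality
\begin{equation}
\sin\theta_A\sin\theta_B\ \ge\ \tfrac{1}{4}\beta_\rho^{2}-1\ >\ 0,
\end{equation}
which certifies that both measurement pairs are genuinely incompatible.

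The second stage computes the maximum of $\beta$ over pure product states $|\psi\rangle\otimes|\phi\rangle$ with the same observables. Writing $\hat B_\pm:=\hat B_0\pm\hat B_1$ and maximising first over the Bloch vector $\vec r_A$ reduces the problem to maximising $\|(\vec r_B\cdot\vec b_+)\vec a_0+(\vec r_B\cdot\vec b_-)\vec a_1\|$ over a unit vector $\vec r_B$. Resolving $\vec r_B$ along the mutually orthogonal pair $\vec b_\pm$ (of lengths $2\cos(\theta_B/2)$ and $2\sin(\theta_B/2)$) turns this into a one-parameter trigonometric optimisation whose closed form is
\begin{equation}
\beta^{\max}_{\text{prod}}=\sqrt{2+2\sqrt{1-\sin^{2}\theta_A\sin^{2}\theta_B}},
\end{equation}
a bound that extends to every separable state by convexity of $\beta$ in $\rho$.

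The theorem then follows by substituting the first-stage lower bound into the second-stage upper bound: since $x\mapsto\sqrt{2+2\sqrt{1-x^{2}}}$ is decreasing on $[0,1]$, the resulting separability threshold is exactly $\sqrt{2+2\sqrt{1-(\beta_\rho^{2}/4-1)^{2}}}$, and any $\rho'$ whose observed Bell value reaches it with the same $(\hat A_\nu,\hat B_\mu)$ cannot be separable. I expect the main technical obstacle to lie in the optimisation of the second stage: the non-orthogonality of the two local measurement pairs couples the two Bloch-sphere maximisations, and it is the adapted $\vec b_\pm$ coordinate system that makes the full angle dependence collapse onto the single combination $\sin\theta_A\sin\theta_B$ already controlled in the first stage, so that the two estimates lock together. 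A minor subsidiary point worth flagging is the reduction from general POVM qubit observables to projective dichotomic ones needed to invoke the Landau identity, which is handled by a standard Naimark dilation.
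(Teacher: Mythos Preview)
Your projective-measurement argument is essentially the paper's: with $a=\cos\theta_A$, $b=\cos\theta_B$ the paper also derives $\beta_\rho\le 2\sqrt{1+\bar a\bar b}$ and $\beta_{\mathrm{sep}}\le\sum_\pm\sqrt{1\pm\bar a\bar b}=\sqrt{2+2\sqrt{1-\bar a^2\bar b^2}}$, and then eliminates $\bar a\bar b$ to get the structure function $f$. Your Landau identity is just the computation of $\hat\beta^2$ done in one step rather than via the anti-commuting pair $\hat X,\hat Y$; your Bloch-vector maximisation is equivalent to the paper's single-qubit uncertainty relation. So on that part the two proofs coincide.

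The gap is the last sentence. Naimark dilation does \emph{not} reduce the general-POVM case to what you have already proved. After dilation the projective observables $\tilde A_\nu$ live on a space of dimension larger than two, and on such a space the separable bound $\sqrt{2+2\sqrt{1-c_A^2c_B^2}}$ (with $c_A=\tfrac12\|[\tilde A_0,\tilde A_1]\|$) simply fails: take $\tilde A_0=\sigma_z\oplus\sigma_z$, $\tilde A_1=\sigma_x\oplus\sigma_z$ on $\mathbb C^4$, so $c_A=1$, yet there is a common $+1$ eigenvector and hence product states reaching $\beta=2>\sqrt2$. Your Stage~2 optimisation is genuinely a qubit computation and does not survive the dilation, so the two stages no longer lock together.

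The paper closes this gap differently. It writes a general qubit observable as $\hat A_0=r_{0}\,\vec a_0\!\cdot\!\vec\sigma+r_0^*\,\openone$ and decomposes it as a convex combination of the projective piece $\vec a_0\!\cdot\!\vec\sigma$ and $\pm\openone$. Any $\pm\openone$ insertion makes the local pair commute, forcing $\beta\le 2=f(2)$. Thus both $\mathrm U'$ and $\mathrm U'_{(11)}$ become the \emph{same} convex combination of the projective bounds and of the value $2$; since $f$ is concave and decreasing, one gets $\mathrm U'_{(11)}\le f(\mathrm U')\le f(\beta_\rho)$ for arbitrary qubit POVMs. That convexity/concavity step, not a dilation, is what carries the trade-off beyond projective measurements.
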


First,  we  consider projective measurements \( \hat{A}_{\nu} = \vec{a}_{\nu} \cdot \vec{\sigma} \) and \(\hat{B}_{\mu} = \vec{b}_{\mu} \cdot \vec{\sigma} \) with \( |\vec{a}_{\nu}| = |\vec{b}_{\mu}| = 1 \), where \( \vec{\sigma} \) is the vector of Pauli matrices. In this case, \( a \equiv \vec{a}_{0} \cdot \vec{a}_{1} \) and \( b \equiv \vec{b}_{0} \cdot \vec{b}_{1} \) uniquely characterize the measurements up to a unitary operation. We shall specify these projective measurements by  \( \Omega \equiv \{a, b\} \). For general state  $\rho$ and a separable state $\varrho_{(11)}=\sum_{r}p_{r}\rho_{r}^{(A)}\otimes \rho^{(B)}_{r}$, we have the upper bounds of the CHSH inequality as ( {see  Supplemental Material \cite{supplemental} A.2})
  \begin{eqnarray}
\beta_{\rho}(\Omega)& \leq& 2\sqrt{1+\bar{a}\bar{b}}\equiv {\rm U}(\Omega), \label {twoen}\\
 \beta_{\varrho_{(11)}}(\Omega)&\le& \sum _{\pm}\textstyle \sqrt{1\pm \bar {a}\bar{b}}\equiv {\rm U}_{(11)}(\Omega) \label{twose}
\end{eqnarray}
 where $\bar c=\sqrt{1-c^{2}}$ . It is easy to see that these two bounds exhibit a trade-off   
\begin{eqnarray}
{\rm  U}_{(11)}{=}  \sqrt{2+2\sqrt{1-\left(\textstyle\frac{{\rm U}^{2}}4-1 \right)^{2}}}\equiv f({\rm U}), \label{trade}
\end{eqnarray}
 {where $f(\cdot)$ is {concave} and  decreasing function.  Here, $\max_{\pm} {\rm U}_{{(11)}}(\Omega_{i, \{\pm\}}) \leq f\left( \max_{\pm} {\rm U}(\Omega_{i, \{\pm\}}) \right)$  stands because  $\max_{\pm}{\rm U}_{(11)}(\Omega_{i, \{\pm\}})=2$   and    ${\rm U}(\Omega_{i, \{\pm\}})=2$ where  $\Omega_{i, \{\pm \}}$ could be  $\{a, \pm 1 \}$, $\{\pm1, b\}$, or $\{\pm 1, \pm 1\}$.   Then, for separable state and with the argument  Eq.(\ref{basic})  we have   
\begin{eqnarray}
\beta_{\varrho_{(11)}}(\Omega')\leq f(\beta_{\rho}). 
\end{eqnarray} Thus, if a Bell  test uses  NLCG  measurements $\Omega'$  that  can generate correlation with   Bell's value  \( \beta_{\rho} > 2 \), the bound for separable states from this Bell's test can be strengthened into \( f(\beta_{\rho})<2 \).  Any correlation exceeding this bound can certify entanglement.}
  {We emphasize that while NLCG measurements can detect entanglement through local correlations, they must themselves be not jointly measurable and able to  generate the nonlocal correlations as required by the definition.  }  

 {We note that the trade-off $f$ between the  ${\rm U}_{(11)}(\Omega)$ and ${\rm U}(\Omega)$ is due to the different behaviors of  $\rm U$ and ${\rm U}_{(11)}$ depending on the angles $|a|$ and $|b|$, namely, they    are respective  positively  or negatively  dependent  on that. Such a difference  may not be specific to a particular Bell inequality. In the SM \cite{supplemental}, we provide a numerical approach enabling one  to explore this point.}




\subsection{B. Tripartite Bell's test}
We now move to  multi-qubit system and begin with a tripartite system, for which a fully separable state can be written as  $\varrho_{(111)}=\sum_{r}p_{r}\rho_{r}^{(A)}\otimes \rho_{r}^{(B)}\otimes \rho^{(C)}_{r}$,  and  partially entangled states allow the decomposition     $\varrho_{(21)}=\sum_{r}p_{r}\rho^{(AB)}_{r}\otimes \rho^{(C)}_{r}+\sum_{r}p'_{r}\rho^{(BC)}_{r}\otimes \rho_{r}^{(A)}+\sum_{r}p''_{r}\rho_{r}^{(AB)}\otimes \rho_{r}^{(B)}\neq \rho_{(111)}$.  
States that cannot be written as \( \varrho_{(21)} \) and \( \varrho_{(111)} \) are genuinely tripartite entangled.  As the first example, we consider the tri-partite Mermin inequality {using NLCG measurements and tightening the bounds for separable states and partially entangled  states}.

{\bf Lemma} {\it 
In a $(3,2,2)$ scenario, adhere to the notions $\hat{A}_{\nu}=\vec{a}_{\nu}\cdot\vec{\sigma} $,   $\hat{B}_{\mu}=\vec{a}_{\mu}\cdot \vec{\sigma} $,  and  $\hat{C}_{\tau}=\vec{a}_{\tau}\cdot \vec{\sigma} $ with  $\nu, \mu, \tau\in \{0, 1\}$ we have  tri-partite Mermin's inequality~\cite{PhysRevLett.65.1838}  as
  \begin{eqnarray}
F_{3}\equiv ({A}_{0}B_{1}+A_{1}B_{0})C_{0}+(A_{0}B_{0}-A_{1}B_{1})C_{1}\leq 4. 
\end{eqnarray}
Given fixed projection  measurement settings specified by $\Omega=\{a,b,c\}$ with  $a=\vec{a}_{0}\cdot  \vec{a}_{1}$, $b=\vec{b}_{0}\cdot \vec{b}_{1}$, and  $c=\vec{c}_{0}\cdot \vec{c}_{1}$,  we have  the maximum Bell value ( {see SM \cite{supplemental} A.3} )  i)
  \begin{eqnarray}
 \beta_{\rho} \leq 2\sqrt{1+\bar a\bar b+\bar a\bar c+\bar b\bar c}\equiv  {\rm U}(\Omega).   \label {three}
\end{eqnarray}
 for a general state and ii)  
\begin{eqnarray}
 \beta_{\varrho_{(2 1)}} &\le &\max_{\Pi (a,b,c)}\sum_{\pm}\sqrt{1+\bar a\bar b\pm(\bar a\bar c+\bar b\bar c)} \nonumber \\
&\equiv &{\rm U}_{(2 1)}(\Omega) \label{twoone}
\end{eqnarray}
for a two-separable state it holds, with the  maximum taken over all permutations of  $(A,B,C)$ specified by  $\Pi (a, b, c)$, while iii) 
\begin{eqnarray}
\beta_{\varrho_{(111)}} & \le& \max_{u,\Pi (a, b,c) }\sum_\pm\sqrt{\frac{(1\pm \bar a \bar b )(1+c\cos u)+ab\bar c\sin u}2}\nonumber \\
&\equiv &{\rm U}_{( 111)}(\Omega ).\label{ones}\end{eqnarray}
for fully separable states.}

Relevant to the same $\Omega$, we have the trade-off between these upper bounds
 \begin{eqnarray}
{\rm U}_{(21)} \leq f_{(21)}({\rm U}), 
\end{eqnarray}
where 
$$f_{(21)}(v) \equiv \max_{\{\Omega\mid  {\rm U}(\Omega) = v\}} {\rm U}_{(21)}(\Omega)=\frac {v+\sqrt{16- v^{2}}}2$$  with   $2\sqrt{2} \leq v \leq 4$ (see A.3 in  {\cite{supplemental}}), and $2\sqrt{2}$ and $4$ are the maximum values that can be achieved using $\varrho_{(21)}$ and general state $\rho$ under  the possible projective measurements.


To generalize the above trade-off to the  general measurements case, we note that $\max_{\pm}{\rm U}_{{(21)}}(\Omega_{i, \{\pm\}})\leq f(\max_{\pm}{\rm U}(\Omega_{i, \{\pm\}}))$,  as these upper bounds  are independent on the signs of $a, b$, and  $c$.   As $f_{(21)}$ is monotonically decreasing  and concave,   with the argument Eq.(\ref{basic})  we have   
 \begin{eqnarray}
 \beta  _{\varrho_{(21)}} (\Omega')  \leq  f_{(21)}({\beta}_{\rho}). \label {sign}
\end{eqnarray} 
\begin{Theorem}
For tri-partite Bell's test featuring qubit systems, if the devices  can generate nonlocal correlations with a Bell value  \( \beta_{\rho} > 2 \sqrt{2}\) , then any correlation with the Bell value \( \beta_{\rho'} >  \beta_{\rho}/2+\sqrt{4-\beta^{2}_{\rho}/4}\) can certify that the relevant state \( \rho' \) is   genuine tripartite entangled. 
 \end{Theorem}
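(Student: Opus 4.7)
The plan is to read off the theorem directly from the structure‐function trade‐off already established in Eq.~(\ref{sign}) of the Lemma's proof. Observe first that the claimed threshold can be rewritten as
\begin{equation}
\beta_{\rho}/2+\sqrt{4-\beta_{\rho}^{2}/4}=\tfrac{1}{2}\bigl(\beta_{\rho}+\sqrt{16-\beta_{\rho}^{2}}\bigr)=f_{(21)}(\beta_{\rho}),
\end{equation}
so the theorem is exactly the statement that, when the fixed NLCG devices are used, every biseparable state $\rho_{(21)}$ respects $\beta_{\rho_{(21)}}\le f_{(21)}(\beta_{\rho})$; anything in excess of this value certifies genuine tripartite entanglement.

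I would assemble the proof in three steps. First, I invoke the Lemma to obtain the projective‐measurement bounds $\mathrm{U}(\Omega)$ and $\mathrm{U}_{(21)}(\Omega)$ and recall the pointwise trade‐off $\mathrm{U}_{(21)}(\Omega)\le f_{(21)}(\mathrm{U}(\Omega))$, noting that $f_{(21)}$ is concave and monotone non‐increasing on $[2\sqrt{2},4]$ with $f_{(21)}(2\sqrt{2})=2\sqrt{2}$ and $f_{(21)}(4)=2$. Second, I lift this trade‐off to arbitrary (rank‐one plus identity) measurements by the convex decomposition argument already displayed: writing any general measurement as a mixture of a rank‐one projector with $\pm\openone$, the resulting Bell value decomposes into a mixture whose ``compatible'' branches are bounded by the projective expressions at $c=\pm1$. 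By concavity and monotonicity of $f_{(21)}$ this yields
\begin{equation}
\beta_{\rho_{(21)}}(\Omega')\le \mathrm{U}'_{(21)}(\Omega')\le f_{(21)}(\mathrm{U}'(\Omega'))\le f_{(21)}(\beta_{\rho}),
\end{equation}
where the last inequality uses that $f_{(21)}$ is decreasing and $\beta_{\rho}\le \mathrm{U}'(\Omega')$.

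Third, I verify the hypothesis $\beta_{\rho}>2\sqrt{2}$ is precisely what makes the strengthened bound informative: in this regime $f_{(21)}(\beta_{\rho})<2\sqrt{2}$, i.e., strictly below the quantum maximum $\mathrm{U}_{(21)}^{\max}=2\sqrt{2}$, so any correlation with $\beta_{\rho'}\ge f_{(21)}(\beta_{\rho})$ strictly exceeds the biseparable bound and rules out $\rho'\in\rho_{(21)}$; by the Lemma's three‐class classification, $\rho'$ must then be genuinely tripartite entangled.

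The main obstacle is the second step: one must make sure the mixture decomposition really applies when \emph{every} measurement, and not just $\hat{C}_{0}$, is allowed to be a general POVM. The excerpt defers this to the SM, and the work there is to iterate the decomposition party by party and verify that at each iteration the ``compatible'' residual terms $\beta(\Omega^{*}_{\pm})$ satisfy a bound of the same functional form, so that concavity of $f_{(21)}$ can be applied at every step. Once that iterated decomposition is confirmed, the theorem is an immediate corollary of Eq.~(\ref{sign}) together with the explicit form of $f_{(21)}$.
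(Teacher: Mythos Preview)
Your proposal is correct and follows essentially the same route as the paper: you identify the claimed threshold with $f_{(21)}(\beta_\rho)$, invoke the Lemma for the projective bounds, and then lift the trade-off to general measurements via the same convex decomposition argument and the concavity/monotonicity of $f_{(21)}$, exactly as in the derivation of Eq.~(\ref{sign}). Your closing remark about iterating the decomposition over all parties is also what the paper relegates to its Supplemental Material.
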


\begin{figure}
\begin{center}
\includegraphics[width=0.435\textwidth]{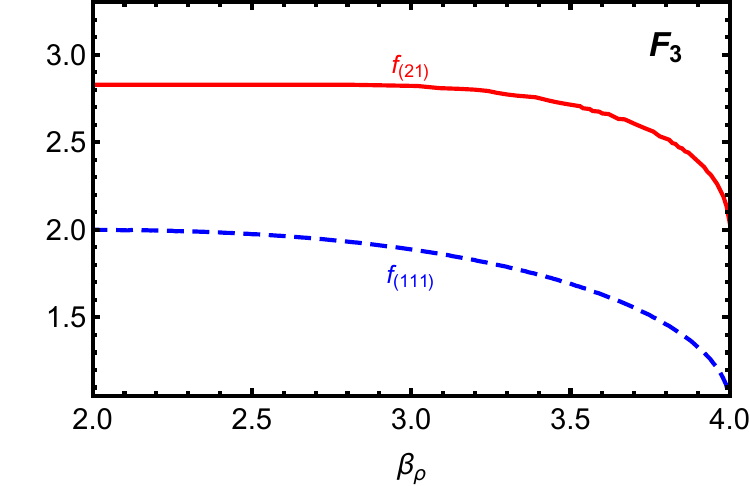}
\end{center}
\caption{For  Bell's inequality $F_3$, the tightened bounds   for bipartition separable state and fully separable state using Bell's values  $\beta_{\rho}$ are given. With devices that can generate a Bell value $\beta_{\rho}$, the strengthened  bound for bipartite separable  states is given as $f_{(21)}$ (red solid curve) and the bound for fully separable states is given  as $f_{(111)}$ (blue dashed curve), which coincide with the structure functions dealing with projective measurements. }\label{fig1}
\end{figure}
 
 {Similarly, in the case of fully separable state we have trade-off $\beta_{\varrho_{(111)}} \leq f_{(111)}(\beta_{\rho})$, in which although we do not have a closed form of the structure function $f_{(111)}$, its property of concavity and monotonically decreasing can be shown numerically by  Fig.1.     The inequality $\max_{\pm}{\rm U}_{{(111)}}(\Omega_{i, \{\pm\}})\leq f(\max_{\pm}{\rm U}(\Omega_{i, \{\pm\}}))$ follows from the fact that these upper bounds are  independent on the sign  (signs of $c$ and $ab$   in the expression can be absorb into the trigonometric functions.) }  For an instance, when NLCG measurements have ever generated some correlation with a Bell value $\beta_\rho=4$, any further correlations generated by the same set of NLCG {measurements}  with a Bell value $\beta_{\rho'}> 1$ can verify the entanglement of the underlying state $\rho'$ instead of  $\beta_{\rho'}>2$ as required by   standard Bell's tests.

\subsection{C. $n$-partite Bell's scenario}
Let us move to a general $(n, 2, 2)$ Bell scenario and consider Bell's inequalities defined using the MK polynomials, and then provide a systematic study for the strengthened bounds. MK polynomials are defined recursively as:
  \begin{eqnarray}
\hat{F}_{n}&=&\frac{\hat{A}_{n}+\hat{A}'_{n}}2\hat{F}_{n-1}+\frac{\hat{A}_{n}-\hat{A}'_{n}}2\hat{F}'_{n-1}, \nonumber \\
\hat{F}'_{n}&=&\frac{\hat{A}_{n}+\hat{A}'_{n}}2\hat{F}'_{n-1}- \frac{\hat{A}_{n}-\hat{A}'_{n}}2\hat{F}_{n-1},
\end{eqnarray}
where  $\hat{A}_{m}, \hat{A}'_{m}$ are the two local measurements on $m$-th side, with $\hat{F}_{2}=\hat{A}_{1}\hat{A}'_{2}+\hat{A}'_{1}\hat{A}_{2} + \hat{A}_{1}\hat{A}_{2}-\hat{A}'_{1}\hat{A}'_{2}$  and   {$\hat{F}'_{2}$}$=\hat{A}_{1}\hat{A}'_{2}+\hat{A}'_{1}\hat{A}_{2}  -\hat{A}_{1}\hat{A}_{2}+\hat{A}'_{1}\hat{A}'_{2}$. Specifically, when $n=2$, we have the CHSH inequality as  $\hat{F}_{2}\leq 2$. When $n=3$, we have the operator for  {Mermin's} inequality  as  $\hat{F}_{3}$.  
 We give analytic bounds for general state and bipartition separable state as  { (see  A.4 in \cite{supplemental}) . Numerical approach for general Bell's inequality is also discussed in  A.5. in \cite{supplemental}} 
 \begin{Theorem}
 For a general n-qubit state on which projective measurements $\hat{A}_{m} = \vec{a}_{0|m} \cdot \vec{\sigma}$  and   $\hat{A}'_{m} = \vec{a}_{1|m} \cdot \vec{\sigma}$ are performed with $\Omega=\{a_{m}=\vec{a}_{0|m}\cdot \vec{a}_{1|m}\}$ denoting measurement settings, it holds the following upper bound
 \begin{eqnarray}
F_n \cos t+F_n'\sin t  \le 2\sqrt{\delta_n+\epsilon_n\sin 2t}\equiv {\rm U}^{(t)}\label {nt}
\end{eqnarray}
where $2\delta_{l}=\Pi^{l}_{m=1}(1+\bar{a}_{m})+\Pi^{l}_{m=1}(1-\bar{a}_{m})$ and $\epsilon_l=\Pi^{l}_{m=1}a_{m}$ for arbitrary value of $t$.  
For bi-partition  separable state $\varrho_{(kk')}=\sum_{r}p_{r}\rho^{(r)}_{k}\otimes \rho^{(r)}_{k'}$ with partition $k+ k'=n$, we have the upper bound 
\begin{eqnarray*}
 \beta_{(k k')}=F_n\cos t+F_n'\sin t \leq {\rm U}^{(t)}_{(kk')},\label{kkk} 
\end{eqnarray*}
where  ${\rm U}^{(t)}_{(kk')}\equiv \sqrt{2\alpha_{t}+2\sqrt{\alpha^2_{t}-(\delta_k^2-\epsilon_k^2)(\delta_{k'}^2-\epsilon_{k'}^2)}}$ and $\alpha_{t}\equiv (\delta_{k'} \delta_k,+\epsilon_{k'}\epsilon_k\sin 2t)$.

 \end{Theorem}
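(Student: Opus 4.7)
My plan is to prove both bounds via operator-algebraic manipulations of the MK polynomials. Part (i) reduces to computing the operator norm of an explicit Pauli expression for $F_n^2$; part (ii) will then follow from a factorization of $F_n + iF'_n$ across the bipartition together with a Cauchy--Schwarz argument.

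For part (i), I would first introduce $P_n := \tfrac{1}{2}(A_n + A'_n)$ and $Q_n := \tfrac{1}{2}(A_n - A'_n)$, which satisfy $\{P_n, Q_n\} = 0$, $P_n^2 = \tfrac{1+a_n}{2}$, $Q_n^2 = \tfrac{1-a_n}{2}$, and $P_n Q_n = -\tfrac{i}{2} X_n$ with $X_n := (\vec a_n \times \vec a'_n)\cdot \vec\sigma$, $\|X_n\|_\infty = \bar a_n$. Substituting into the MK recursion and inducting, one finds $F_n^2 - F_n'^2 = a_n(F_{n-1}^2 - F_{n-1}'^2)$ and $\{F_n, F'_n\} = a_n\{F_{n-1}, F'_{n-1}\}$, which iterate to $F_n^2 = F_n'^2$ and $\{F_n, F'_n\} = 2\epsilon_n\,\mathbf 1$. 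The remaining coupled recursion for $\mathcal A_n := F_n^2$ and $\mathcal B_n := -i[F_n, F'_n]$ decouples on the combinations $\mathcal A_n \pm \tfrac{1}{2}\mathcal B_n = (1\pm X_n)\bigl(\mathcal A_{n-1} \pm \tfrac{1}{2}\mathcal B_{n-1}\bigr)$, iterating to the closed form
\[
F_n^2 \;=\; \tfrac{1}{2}\bigl[\,\textstyle\prod_k(1 + X_k)+\prod_k(1 - X_k)\,\bigr].
\]
Since the $X_k$ act on distinct qubits with eigenvalues $\pm\bar a_k$, the maximum eigenvalue of $F_n^2$ is attained by aligning all $X_k$ to $+\bar a_k$, giving $\|F_n^2\|_\infty = \delta_n$. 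Combined with $\{F_n, F'_n\} = 2\epsilon_n\mathbf 1$, this yields $(F_n \cos t + F'_n\sin t)^2 = F_n^2 + \epsilon_n\sin 2t\cdot\mathbf 1$, hence the claimed operator norm (the factor $2$ in the stated bound reflects the paper's convention of defining the base $F_2$ without the $\tfrac{1}{2}$ in the recursion).

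For part (ii), I would prove the factorization $M^A_k \cdot M^B_{k'} = (1+i)\,M_n$, where $M_n := F_n + iF'_n$ and $M^A_k, M^B_{k'}$ are the analogous complex polynomials on the two subsystems. This follows by iterating $M_n = (P_n - iQ_n)M_{n-1}$ and using $P_{k+1} - iQ_{k+1} = \tfrac{1-i}{2}(A_{k+1} + iA'_{k+1})$ to collect the seed factor of the $B$ side. Taking real and imaginary parts then expresses $F_n, F'_n$ bilinearly in $(F^A_k, F^{'A}_k, F^B_{k'}, F^{'B}_{k'})$; on a product state $\rho^A_k\otimes\rho^B_{k'}$ these expectations factorize, and a short trigonometric rearrangement gives $\langle G_n(t)\rangle = \tfrac{1}{\sqrt 2}\vec f_A^{\,T} M(\tau)\vec f_B$ with $\vec f_A = (\langle F^A_k\rangle, \langle F^{'A}_k\rangle)^T$ (similarly $\vec f_B$), $\tau = t + \pi/4$, and $M(\tau) = \bigl(\begin{smallmatrix}\cos\tau & \sin\tau \\ \sin\tau & -\cos\tau\end{smallmatrix}\bigr)$ a reflection. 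Applying part (i) to every rotation $\alpha F_k + \beta F'_k$, combined with $\{F_k, F'_k\} = 2\epsilon_k\mathbf 1$, yields the ellipsoid constraint $\vec f_A\vec f_A^{\,T} \preceq D_k := \bigl(\begin{smallmatrix}\delta_k & \epsilon_k \\ \epsilon_k & \delta_k\end{smallmatrix}\bigr)$, and analogously $\vec f_B\vec f_B^{\,T} \preceq D_{k'}$. Cauchy--Schwarz then gives $|\langle G_n(t)\rangle|^2 \leq \tfrac{1}{2}\|D_k^{1/2} M(\tau) D_{k'}^{1/2}\|_\infty^2$; a $2\times 2$ singular-value calculation (cleanest in the common Pauli-$X$ eigenbasis diagonalizing $D_k$ and $D_{k'}$) yields trace $2\alpha_t$ and determinant $(\delta_k^2 - \epsilon_k^2)(\delta_{k'}^2 - \epsilon_{k'}^2)$ (remarkably $t$-independent), reproducing the claimed $\mathrm U^{(t)}_{(kk')}$. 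Convexity extends the bound to arbitrary mixtures of biseparable product states.

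The main obstacle, I expect, is the detailed operator bookkeeping: the clean factorization $\mathcal A_n \pm \tfrac{1}{2}\mathcal B_n = \prod_k(1\pm X_k)$ and the $(1+i)$ prefactor in $M^A_k M^B_{k'} = (1+i)M_n$ both require careful tracking of signs and commutation relations through the recursion, and one must also verify that $D_k \succeq 0$ (i.e., $\delta_k \geq |\epsilon_k|$) for the Cauchy--Schwarz step to be meaningful. Once these structural identities are in place, the $2\times 2$ singular-value calculation and Cauchy--Schwarz step are short and mechanical.
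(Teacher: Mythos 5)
Your proposal is correct and follows essentially the same route as the paper: the same closed form for $F_n^2$ (your $\tfrac12[\prod_k(1+X_k)+\prod_k(1-X_k)]$ is exactly the paper's $\hat\delta_n^{+}$ with $X_k=\bar a_k\hat Y_k$), the same constant anticommutator $\{F_n,F_n'\}\propto\epsilon_n$, the same factorization of the MK operator across the cut (your complex identity $M_k^AM_{k'}^B=(1+i)M_n$ is equivalent to the paper's real bilinear decomposition $\hat F_n=\tfrac12\hat F_k\hat F_{k'+}+\tfrac12\hat F_k'\hat F_{k'-}$), and the same ellipse constraint on $(\langle F_k\rangle,\langle F_k'\rangle)$ followed by Cauchy--Schwarz. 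The only difference is presentational: you package the final maximization as a $2\times2$ singular-value computation of $D_k^{1/2}M(\tau)D_{k'}^{1/2}$, where the paper parametrizes the ellipse with $(r,u)$ and maximizes trigonometrically; both yield the identical bound ${\rm U}^{(t)}_{(kk')}$.
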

The technique  used here allows one to tighten the  bounds for  the state allowing   a more complex multiplicative decomposition, but the complexity goes beyond the scope of this paper and is not provided here.  As the first example, we consider a system of two qubits and the {tilted} CHSH inequality  $(\textstyle {\sin t+\cos t }) (A_{0}A'_{1}+A'_{0}A_{1})/2 +(\textstyle {\sin t-\cos t })(A_{0}A_{1}-A'_{0}A'_{1})/2=\sin t F_{2}+\cos t F'_{2}$, for which, we have the upper-bound for a  general state $2\sqrt{1+\bar{a}_{1}\bar a_{2}+a_{1}a_{2}\sin 2t}$ and for a general separable state $\sum_{\rm \pm} \sqrt{1+a_{1}a_{2}\sin 2t\pm \bar {a}_{1} \bar {a}_2} $. When $t=\frac{\pi}{2}$, we recover Eq.(\ref{twoone}) and Eq.(\ref {twose}). 
As another example, we consider tripartite Bell's quantity  ${F}_{3}\sin t  +{F}'_{3}\cos t  $, the  bounds for general state and partially entangled state follow directly  from Eq.(\ref{nt})  and Eq.(\ref{kkk}), while the bounds for fully separable states can be derived through an analogous approach ( {see A.3 in \cite{supplemental}} ).    {Specifically, we provide the tightened bounds when  $\sin t=\cos t$, namely,  $F_3+F_3'\leq 4$   in  A.5. in \cite{supplemental} }

 {These tightened  bounds are practically  useful. Recall that entanglement is typically detected using trusted devices. To verify  whether  these devices function as intended, they are often verified  with  extreme correlations that are specific to them. Our approach offers an advantage: it does not require the generation of such extreme correlations, which can be experimentally demanding. The presence of nonlocal correlations is sufficient to tighten the bounds on Bell violations, with which, one can detect entanglement with correlations that are not necessarily nonlocal.   }

\section{III. NPA approach to strengthen Bell's inequalities}

Let us consider the Bell's scenario  where some parties have certain information about the measurements. We  now adapt  the NPA hierarchy to incorporate the information of measurements. The NPA hierarchy \cite{PhysRevLett.98.010401} is a sequence of tests designed to characterize correlations that can be realized with a quantum system ( {see Sec. B in \cite{supplemental}}).  Here, we use the NPA hierarchy to help the detection of entanglement while not involving Bell's inequality.

Here, we consider a scenario in which Alice uses some known measurement settings on her qubit and can incorporate them  into the NPA hierarchy. Define $\mathcal{O}$ as a set of $\{O^{(A)}_e \otimes O^{(B)}_f\}$, with $O^{(A)}$ specifying the measurement operators  relevant to Alice's system, their products, or linear combinations, and $O^{(B)}$ for Bob's side. Then, one can define a  matrix $\Gamma$ with entries  
$\Gamma_{ee', ff'} = {\rm Tr}[\rho (O^{(A)\dagger}_e O^{(A)}_{e'} + O^{(A)\dagger}_{e'} O^{(A)}_e)\otimes(O^{(B)\dagger}_f O^{(B)}_{f'} + O^{(B)\dagger}_{f'} O^{(B)}_f)],$
where $ee'$ and $ff'$ specify the index of lines and columns of this entry in the matrix. When $O^{(A)}_e, O^{(A)}_{e'}$ and $O^{(B)}_f, O^{(B)}_{f'}$ involve commuting operators, the corresponding matrix entry $\Gamma_{ee', ff'}$ is relevant to the statistics from the experiment and  called physical terms and otherwise is referred to as nonphysical terms when  non-commuting operators are involved.

Consider a product state $\varrho=\rho^{(A)}\otimes \rho^{(B)}$, we have factorized matrix $\Gamma=\Gamma^{(A)}\otimes\Gamma^{(B)}$ where $\Gamma^{(A)}_{ee'}={\rm Tr}[\rho^{(A)} (O^{(A)\dagger}_{e} {O}^{(A)}_{e'}+O^{(A)\dagger}_{e'} {O}^{(A)}_{e} )]$ and  $\Gamma^{(B)}_{ff'}={\rm Tr}[\rho^{(B)}(O^{(B)\dagger}_{f} {O}^{(B)}_{f'}+O^{(B)\dagger}_{f'} {O}^{(B)}_{f})]$ ( {see section B. in \cite{supplemental}}),  and $\Gamma^{(A)}$ and $\Gamma^{(B)}$ are semidefinite. As a result,  
for a general separable state  $\varrho=\sum_{r}p_{r}\rho^{(A)}_{r}\otimes \rho^{(B)}_{r}$, the convex combination of the relevant   matrix  $\sum_{r}p_{r}\Gamma_{r}$ is also positive. This  may not be the case for  entangled states.  For separable  states, one can assign some real numbers  to these nonphysical terms such that the matrix  is semidefinite. Otherwise, the state is entangled.    Focusing on qubit systems, for instance, measurements on Alice's side are known as  \( O^{(A)}_{e}  = r_{0} \vec{a}_{0} \cdot \vec{\sigma}+r^{*}_{0}\openone \) and \( O^{(A)}_{e'} =  r_{1} \vec{a}_{1} \cdot \vec{\sigma}+r^{*}_{1}\openone \),   we have 
$O^{(A)\dagger}_e O^{(A)}_{e'} + O^{(A)\dagger}_{e'} O^{(A)}_e= 2 \left( r^{*}_{0} O^{(A)}_{e'} + r^{*}_{1}O^{(A)}_{e} + (r_{0} r_{1} a + r^{*}_{0} r^{*}_{1}) \openone \right),$
which entirely consists  of operational terms and constants.  {Then, one can express these non-physical terms with  this  expectation values from experiment and  with known parameters relevant to measurements. If  potential state is separable, the matrix must be positive.    Otherwise the matrix is negative, the potential state must be entangled. In  section B in \cite{supplemental}, we provide an example showing that local correlation can verify entangled state. }

\section{Discussion and Conclusion}

Bell tests are widely employed for entanglement detection. When the devices in a Bell test are only partially characterized, the bounds for separable states may deviate from standard local bounds. To improve the efficiency of these methods, this paper explores how to strengthen Bell inequalities for separable states in multi-qubit systems. For bipartite and tripartite systems, we derive explicit bounds for both separable and general quantum states, revealing inherent trade-offs between them. These results enable entanglement detection beyond local bounds using NLCG measurements, i.e., measurements  capable  of generating nonlocal correlation. Furthermore, we introduce a systematic approach to strengthen Bell inequalities defined in terms of  Mermin-Klyshko (MK) polynomial for \textit{n}-qubit systems, obtaining analytical bounds for various types of separable states. Additionally, in scenarios where some measurements are well-characterized, we demonstrate how the NPA hierarchy can be applied to strengthen  bounds for separable states.

\subsection*{Acknowledgment}
We appreciate the discussion with Armin Tavakoli.  L.S.\ is supported by Key-Area Research and Development Program of Guangdong Province Grant No. 2020B0303010001.
S.Y.\ is supported by Innovation Program for Quantum Science and Technology (2021ZD0300804) and Key-Area Research and Development Program of Guangdong Province Grant No. 2020B0303010001.  

\bibliographystyle{plain}
\bibliography{nonlocality}

\appendix
\onecolumngrid

\section{A. Tighten the bound of Bell's inequality for separable state}

\subsection{A.1.  The decomposition  of  Bell's  value  for general  measurements}

 {In the Eq.(1) main-text, the decomposition of Bell's test often involves  the measurement setting $\Omega^*_{i}$, which  contains  some identities as measurements.  For the purpose of computing the upper bound of Bell's operator  employing  $\Omega^{*}_{i}$,  one can replace the identity in the Bell test with projection measurements compatible with another measurement on the same side, which leads to   measurement settings $\Omega_{i, \{\pm\}}$. We now show   upper bounds (for general state or separable state)  
$${\rm U}(\Omega^*_{i}) \leq \max_{\pm} {\rm U}(\Omega_{i, \{\pm\}}) .$$
For simplicity, we first show that in  the minimal Bell's scenario. }

\emph{Case 1: }
When a measurement setting  $\Omega^*_{i}=\{\{\hat{A}_{0}, \hat{A}_{1}\}, \{\hat{B}_{0}, \hat{B}_{1}\}\}$ contains  one  identity on Alice's side, for instance,  $\hat{A}_{0}=\openone$ and   $A_{1}=\{|0 \rangle \langle 0|_{1}, |1 \rangle \langle 1|_{1}\}$. We have
\begin{eqnarray}
A_{1}{B}_{\mu}&=&{\rm Tr}(\rho \hat{A}_{1}\hat{B}_{\mu})={\rm Tr}(\rho |0\rangle \langle 0|_{1} \hat{B}_{\mu})- {\rm Tr}(\rho |1\rangle \langle 1|_{1} \hat{B}_{\mu})\nonumber\\
&=&p(0|A_{1}){\rm Tr}( \rho^{(B)}_{0|A_{1}} \hat{B}_{\mu})-p(1|A_{1}){\rm Tr}(\rho^{(B)}_{1|A_{1}} \hat{B}_{\mu})\nonumber\\
&=&\sum_{a'=0, 1}p(a'|A_{1}){\rm Tr}(|a'\rangle \langle a'|_{1}\otimes  \rho^{(B)}_{a'|A_{1}} \hat{A}_{1}\hat{B}_{\mu})\nonumber\\
&=&\sum_{a'=0, 1}p(a'|A_{1}){\rm Tr}(\rho_{1_{a'}} \hat{A}_{1}\hat{B}_{\mu}), 
\end{eqnarray}
where   $p(a'|A_{1})\equiv {\rm Tr}(\rho |a'\rangle \langle a'|_{1}\otimes \openone_{B} )$  and $a'=0,1$  specifies outcomes,  and $\openone_{B}$ is   identity acting on Bob's system, and   $\rho^{(B)}_{a'|A_{1}}={\rm Tr}_{A}(\rho| a'\rangle \langle a'|_{1})$ with the partial trace taken on Alice's side, and  $\rho_{1_{a'}}= |a'\rangle \langle a'|_{1}\otimes  \rho^{(B)}_{a'|A_{1}}$,
and \begin{eqnarray}
{{A_{0}B_{\mu}}}={\rm Tr}(\rho \hat{A}_{0}\hat{B}_{\mu}) =p(0|A_{1}){\rm Tr}(\rho_{1_{0}} \hat{A}_{1}\hat{B}_{\mu})+p(1|A_{1}){\rm Tr}(\rho_{1_{1}} (-\hat{A}_{1})\hat{B}_{\mu}).   \label{openonea}
\end{eqnarray}
Then, we have  an equivalent decomposition of this Bell's value   $\beta(\Omega^{*}_{i})$ as 
\begin{eqnarray}
\beta(\Omega^{*}_{i})=p(0|A_{1})\beta_{\rho_{1_{0}}}(\Omega_{i, +})+  p(1|A_{1})\beta_{\rho_{1_{1}}}(\Omega_{i, -})\leq \max_{\pm } {\rm U}(\Omega_{i,\{\pm \}}).  
\end{eqnarray}
where $\Omega_{i, \pm}=\{\pm, \{\hat{B}_{0}, \hat{B}_{1}\}\}$ means measurement settings, where  Alice uses compatible and rank one projection measurements. 

\emph{Case. 2: }  When there are two identities in $\Omega^*_{i}=\{\{\hat{A}_{0}, \hat{A}_{1}\},   \{\hat{B}_{0}, \hat{B}_{1}\}\}$, for instance $\hat{A}_{0}=\hat{A}_{1}=\openone$ on Alice's side,  such  a correlation can be equivalently generated using $\hat{A}_{0}=\hat{A}_{1}=\{|0\rangle \langle 0|, |1\rangle \langle 1|\}$  and state $\rho^{*}=|0\rangle \langle 0 |\otimes {\rm Tr}_{A}(\rho)$ as
\begin{eqnarray}
A_{\nu}B_{\mu}={\rm Tr }(\rho \hat{A}_{\nu} \hat{B}_{\mu} )={\rm Tr}(\rho^{*} \hat{A}_{\nu} \hat{B}_{\mu} ), 
\end{eqnarray}
 Finally, we have in this setting,  $$\beta_{\rho}(\Omega^{*}_{i})=\beta_{\rho^{*}}(\Omega_{i, +})\leq \max_{\pm}{\rm U}(\Omega_{i, \pm}),$$
for any state $\rho$. 

This analysis can be performed independently on each side when  measurements $(\pm )\openone $ are involved as well as being  extended to the general $(n, 2, 2)$ scenario via a one-by-one decomposition of measurements.   We  then have   
$$\beta_{\rho}(\Omega^*_{i})\leq {\rm U}(\Omega^*_{i}) \leq  \max_{\pm} {\rm U}(\Omega_{i, \{\pm\}}) .$$

\subsection{A.2. Bound of Bell's inequalities using projection measurements}

\subsubsection{Two-partite Bell's Inequality}
In this section, we give bounds for  states presenting various structures of entanglement.  Our basic tool is an uncertainty relation for two protective measurements  $\hat{A}_{\nu}=\vec a_\nu\cdot\vec \sigma $ with unit Bloch vector $\vec a_\nu$ with $\nu=1,2$. Clearly,  two operators $\hat{A}_{\pm}=\hat{A}_{0}\pm\hat{A}_{1}$ are anti-commuting and as 
$  
 \hat{A}^{2}_{\pm}=2(1\pm a)$ with $ a=\vec a_0\cdot\vec a_1$
 we have
\begin{eqnarray}
\frac{\langle \hat A_{0} + \hat A_{1}\rangle^{2}}{2(1+a)}+\frac{\langle \hat A_{0} - \hat A_{1} \rangle^{2}}{2(1-a)}\leq 1. \label{UR}
\end{eqnarray}
Thus there exist $|r|\leq 1$, $0\leq u_{1}< 2\pi$ such that 
\begin{eqnarray}
 \langle \hat A_{+}\rangle &\equiv& r\sqrt{2(1+a)}\cos v_{1}, \label{singl1} \\
 \langle \hat A_{-}\rangle&\equiv& r\sqrt{2(1-a)}\sin v_{1}. \label{singl2}
\end{eqnarray}

We first consider the  Bell scenarios $(2,2,2)$ and $(3,2,2)$ as examples, and observers are labeled with $A, B, C$. Each observer performs 
 two projection measurements and, without loss of generality, we assume the first measurement $\hat A_0, \hat B_0, \hat C_0, $ to be $\sigma_z$ while the second measurement to be $\hat{A}_{1}=a\sigma _{z}+\bar{a}\sigma_{x}$, $\hat{B}_{1}=b\sigma_{z}+\bar{b}\sigma_{x}$, and $\hat{C}_{1}=c\sigma_{z}+\bar{c}\sigma_{x}$ with $\bar e=\sqrt{1-e^2}$. 
\begin{Lemma} 
Given a fixed set of projective local measurements $\Omega=\{ a, b\}$, for CHSH inequality
 $$\beta_{\rm CHSH}\equiv A_{0}B_{1} +A_{1}B_{0} +A_{0}B_{0} -A_{1}B_{1}, $$
the maximal Bell value is 
  \begin{eqnarray}
\beta_{\rm CHSH}\leq 2\sqrt{1+\bar{a}\bar{b}}, \label {twoen2}
\end{eqnarray} 
for any quantum state while 
\begin{eqnarray}
\beta^{(sep)}_{\rm CHSH}\le \sqrt{1+\bar a\bar b}+\sqrt{1-\bar a\bar b}. \label{twose}
\end{eqnarray}
for a general separable  state. 
 Roughly,  more incompatibility (a larger $\bar a\cdot\bar b$ ) implies a greater quantum bound while a smaller bound for separable states.
\end{Lemma}
\begin{proof}
     By denoting
\begin{eqnarray}
\hat{X}=\hat{A}_{0}\hat{B}_{1}+\hat{A}_{1}\hat{B}_{0}, \quad
\hat{Y}=\hat{A}_{0}\hat{B}_{0}-\hat{A}_{1}\hat{B}_{1}
\end{eqnarray}
and $m^{\pm}_{ab}\equiv \bar a\bar b\pm ab$, we see that $\{\hat{X}, \hat{Y}\}=0$ and 
  \begin{eqnarray*}
\hat{X}^2&=&2(1+ab)\openone+2\bar a\bar b\sigma^{(A)}_{y}\sigma^{(B)}_{y}\le 2(1+m_{ab}^ +)\openone,\\
\quad \hat{Y}^2&=&2(1-ab)\openone+2\bar a\bar b\sigma^{(A)}_{y}\sigma^{(B)}_{y}\le 2(1+m_{ab}^-)\openone.\label{sqrtwo}
\end{eqnarray*}
As a result, for a general two-qubit state $\rho$, it holds the following uncertainty relation
  \begin{eqnarray*}
\frac{X^2}{2(1+m^{+}_{ab})}+\frac{{Y}^{2}}{2(1+m^{-}_{ab})}
\le 1, \label{twoqubit}
\end{eqnarray*}
where and henceforth notion  $A=\langle \hat{A} \rangle$ will be often used   and it follows that $$ X=r_{2}\sqrt{2(1+m^{+}_{ab})}\cos u_{1},\
Y=r_{2}\sqrt{2(1+m^{-}_{ab})}\sin u_{1},$$
where $|r_{2}|\leq 1$, $0\leq v_{1}\leq 2\pi$. 
Then  CHSH inequality for a general quantum state  is bounded as 
  \begin{eqnarray}
\beta_{\rm CHSH}&=&X+Y \nonumber \\ &=&r_{2}\sqrt{2(1+m^{+}_{ab})}\cos u_{1}+
r_2\sqrt{2(1+m^{-}_{ab})}\sin u_{1} \nonumber \\
&\leq& r_{2}\sqrt{2[(1+m^{+}_{ab})+(1+m^{-}_{ab})][(\cos^{2}u_{1}+\sin^2u_{1})]}\nonumber \\
&\leq & 2\sqrt{1+\bar{a}\bar{b}},
\end{eqnarray}
where we have Schwarz inequality  in the first inequality. 
Specifically, when measurements are maximum incompatible, namely, $a=b=0$, we recover the maximum quantum violation $2\sqrt{2}$.

Now we consider  a  product state $\rho=\rho^{(A)}\otimes\rho^{(B)}$, it holds
   \begin{eqnarray*}
X&=&\langle \hat{A}_{0}\rangle\langle \hat{B}_{1}\rangle+\langle \hat{A}_{1}\rangle\langle \hat{B}_{0}\rangle=\frac{A_{+}B_{+} -A_{-}B_{-}}2, \nonumber \\
Y&=&\langle \hat{A}_{0}\rangle\langle \hat{B}_{0}\rangle-\langle \hat{A}_{1}\rangle\langle \hat{B}_{1}\rangle=\frac{ A_{+}B_{-}+A_{-}B_{+}}2, \nonumber
\end{eqnarray*}
where $\hat{A}_{\pm }\equiv \hat{A}_{0}\pm \hat{A}_{1}$ and $\hat{B}_{\pm }\equiv \hat{B}_{0}\pm \hat{B}_{1}$.
For a general angle $t$, we denote $B_t=\langle \hat B_+\cos t+\hat B_-\sin t\rangle$ and
$B_t'=\langle \hat B_+\sin t-\hat B_-\cos t\rangle$, we have 
\begin{eqnarray*}\label{f2}
&&X\cos t+Y\sin t=
\frac{{A}_+{B_t}+A_-B_t'}2\\
&=&\frac{{A}_+}{\sqrt{2(1+a)}}\cdot \sqrt{\frac{1+a}2}B_t+\frac{ {A}_-}{\sqrt{2(1-a)}}\cdot \sqrt{\frac{1-a}2}B_t'\\
&\le&\sqrt{\frac{1+a}2B_t^2+\frac{1-a}2B_t'^2}\\
&=&\sqrt{\frac{1+a\cos 2t}2{B}_+^2+\frac{1-a\cos 2t}2{B}_-^2+a{B}_+{B}_-\sin 2t},
\end{eqnarray*}
where  Schwarz inequality and Eq.(\ref{UR}) are used in the  inequality. 
From the uncertainty relation for a single qubit it follows that there exists $0\le r_2\le 1$ and angle $v_2$ such that $B_+=r_2\sqrt{2(1+b )}\cos v_2$ and $B_-=r_2\sqrt{2(1-b)}\sin v_2$. As a result
\begin{eqnarray*} \nonumber
&&(X\cos t+Y\sin t)^2\\ \nonumber
&=&r_2^2(1+ab\cos 2t+(a\cos 2t+b)\cos 2v_2 +a\bar b\sin 2v_2\sin 2t)\\ \nonumber
&\le&1+ab\cos 2t+\sqrt{(a\cos 2t+b)^2+a^2\bar b^2\sin^2 2t}\\ \nonumber
&=&{1+ab\cos 2t+\sqrt{(1+ab\cos 2t)^2-\bar a^2\bar b^2}} \\ 
&=&\Big(\textstyle\sum_\pm\sqrt{\frac{1+ab\cos 2t\pm\bar a\bar b}2}\Big)^2,
\end{eqnarray*}
where the inequality above is due to $r_2\leq 1$ and  the Schwartz inequality. As a result
\begin{eqnarray}
X\cos t+Y \sin t\le
\sum_\pm\sqrt{\frac{1+ab\cos 2t\pm\bar a\bar b}2}\label{22}
\end{eqnarray}

By taking $t=\frac\pi4$ we have proven Eq.(\ref{twose}) in the case of product state and thus for all separable states. Specifically,  when measurements are compatible $a=b=1$, we have the classical bound.  
\end{proof}
\subsection{A.3 Tri-partite Bell's inequality. }
\begin{Lemma}
In a $(3,2,2)$ scenario, given fixed measurement settings $a, b, c$,  Mermin's inequality reads 
$$\beta_{\rm M}\equiv (A_{0}B_{1}+A_{1}B_{0})C_{0}+(A_{0}B_{0}-A_{1}B_{1})C_{1}$$
has the maximum Bell value i)
  \begin{eqnarray}
\beta^{(3)}_{\rm M}\leq 2\sqrt{1+\bar a\bar b+\bar a\bar c+\bar b\bar c}. \label {three}
\end{eqnarray}
 for a general state and ii)
\begin{eqnarray}
\beta^{(21)}_{\rm M}\le \max_{\Pi (a,b,c)}\sum_{\pm}\sqrt{1+\bar a\bar b\pm(\bar a\bar c+\bar b\bar c)}\label{twoone}
\end{eqnarray}
for a two-separable state it holds, with maximum taken over all cyclic $(a,b,c)$ and $\Pi$ specifies the permutation, while iii) 
\begin{equation}
\beta^{(111)}_{\rm M}\le \max_{u, \Pi (a,b,c)}\sum_\pm\sqrt{\frac{(1\pm \bar a \bar b )(1+c\cos u)+ab\bar c\sin u}2}.\label{ones}
\end{equation}
for fully separable states.

\end{Lemma}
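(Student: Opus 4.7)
The plan rests on two ingredients from the bipartite lemma: the anticommuting ``CHSH components'' $\hat X=\hat A_0\hat B_1+\hat A_1\hat B_0$ and $\hat Y=\hat A_0\hat B_0-\hat A_1\hat B_1$ on $AB$, for which $\hat X^2=2(1+ab)+2\bar a\bar b\,\sigma_y^A\sigma_y^B$, $\hat Y^2=2(1-ab)+2\bar a\bar b\,\sigma_y^A\sigma_y^B$ and $\{\hat X,\hat Y\}=0$; together with the single-qubit uncertainty relation that parameterizes the allowed vector $(\langle\hat C_0\rangle,\langle\hat C_1\rangle)$ on an ellipse.

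For (i), I would write $\hat\beta_M=\hat X\hat C_0+\hat Y\hat C_1$ and square. Because $\hat C_\tau^2=I$ and $\{\hat X,\hat Y\}=0$, the cross terms collapse to $\hat X\hat Y[\hat C_0,\hat C_1]$. Using $\hat A_0\hat A_1=aI+i\bar a\,\sigma_y^A$ and the analogous identities on $B,C$ gives $\hat X\hat Y=-2i(\bar a\,\sigma_y^A+\bar b\,\sigma_y^B)$ and $[\hat C_0,\hat C_1]=2i\bar c\,\sigma_y^C$, whence $\hat\beta_M^2=4(I+\bar a\bar b\,P_{AB}+\bar a\bar c\,P_{AC}+\bar b\bar c\,P_{BC})$ with $P_{XY}\equiv\sigma_y^X\sigma_y^Y$. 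These three operators commute and obey $P_{AB}P_{AC}=P_{BC}$, so their simultaneous eigenvalue triples have product $+1$, and the weighted sum is maximized at the all-$+1$ sector; this gives $\hat\beta_M^2\le 4(1+\bar a\bar b+\bar a\bar c+\bar b\bar c)\,I$ and hence \eqref{three}.

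For (ii), by convexity I may restrict to a pure product state $\rho_{AB}\otimes\rho_C$ for the partition $AB|C$. The operator identities above, combined with $\{\hat X,\hat Y\}=0$, yield the two-qubit uncertainty relation $X_0^2/[2(1+m^+_{ab})]+Y_0^2/[2(1+m^-_{ab})]\le 1$, so I parameterize $X_0=\sqrt{2(1+m^+_{ab})}\cos v$, $Y_0=\sqrt{2(1+m^-_{ab})}\sin v$. The single-qubit ellipse on $C$ reads $\langle\hat C_{0,1}\rangle=\sqrt{(1+c)/2}\cos u\pm\sqrt{(1-c)/2}\sin u$. Inserting both into $\langle\beta_M\rangle=X_0\langle\hat C_0\rangle+Y_0\langle\hat C_1\rangle$ and applying Cauchy--Schwarz in $(\cos u,\sin u)$ eliminates $u$, leaving $\langle\beta_M\rangle^2\le 2(1+\bar a\bar b+ab\cos 2v+c(\bar a+\bar b)\sin 2v)$. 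Maximizing over $v$ gives $2[(1+\bar a\bar b)+\sqrt{a^2b^2+c^2(\bar a+\bar b)^2}]$, and the algebraic identity $(1+\bar a\bar b)^2-\bar c^2(\bar a+\bar b)^2=a^2b^2+c^2(\bar a+\bar b)^2$ recasts this as $\bigl(\sum_\pm\sqrt{1+\bar a\bar b\pm(\bar a\bar c+\bar b\bar c)}\bigr)^2$; cycling the isolated party delivers \eqref{twoone}.

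For (iii), I expand $\langle\beta_M\rangle$ over a pure $\rho_A\otimes\rho_B\otimes\rho_C$ and regroup it as $\tfrac12[a_+(c_0b_++c_1b_-)+a_-(c_1b_+-c_0b_-)]$. The single-qubit UR on $A$, combined with Cauchy--Schwarz, eliminates the $A$ parameters to produce $\langle\beta_M\rangle^2\le(c_0^2+c_1^2)(b_0^2+b_1^2)+2a[(c_0^2-c_1^2)b_0b_1+c_0c_1(b_0^2-b_1^2)]$. In Bloch-plane angles $\tau_B=2\theta_B-\theta_b$, $\tau_C=2\theta_C-\theta_c$ (with $\cos\theta_b=b,\sin\theta_b=\bar b$, etc.) this simplifies to $(1+b\cos\tau_B)(1+c\cos\tau_C)+a\bar c(b+\cos\tau_B)\sin\tau_C-a\bar b(c+\cos\tau_C)\sin\tau_B$ after absorbing the sign choices $\tau_B\to-\tau_B$, $\tau_C\to-\tau_C$ into the max. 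Maximizing first over $\tau_B$ gives a square root of $[b(1+c\cos u)+a\bar c\sin u]^2+a^2\bar b^2(c+\cos u)^2$ with $u=\tau_C$; the Pythagorean-like identity $(c+\cos u)^2+\bar c^2\sin^2u=(1+c\cos u)^2$ converts this radicand into $[(1+c\cos u)+ab\bar c\sin u]^2-\bar a^2\bar b^2(1+c\cos u)^2$, and a final square-factoring as in (ii) then expresses the bound as $\max_u\sum_\pm\sqrt{[(1\pm\bar a\bar b)(1+c\cos u)+ab\bar c\sin u]/2}$. The main obstacle is this last step: correctly handling the two-angle optimization together with its sign choices, and recognizing that the Pythagorean identity reduces an apparent two-parameter bound to the claimed one-parameter form, with the surviving angle naturally identified as $C$'s polar Bloch angle because of $\hat F_3$'s asymmetric role in $C$.
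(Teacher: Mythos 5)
Your proof is correct and rests on the same machinery as the paper's: squaring the Bell operator and exploiting $\{\hat X,\hat Y\}=0$ for (i), and combining the single-qubit/two-qubit uncertainty ellipses with Cauchy--Schwarz for (ii) and (iii). The only difference is bookkeeping --- you eliminate subsystems in a different order (e.g.\ in (iii) you integrate out $A$ first and then optimize two Bloch angles, whereas the paper reuses its bipartite product-state bound on $X\cos t+Y\sin t$ and is left with only $C$'s angle) --- but the closing identities $(1+\bar a\bar b)^2-(\bar a+\bar b)^2=a^2b^2$ and $(c+\cos u)^2+\bar c^2\sin^2 u=(1+c\cos u)^2$ are exactly the ones the paper's derivation relies on.
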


{\it Proof. } i)  Consider the following pair of Bell operators
\begin{eqnarray*}
\hat{F}_3&=&(\hat{A}_{0}\hat{B}_{1}+\hat{A}_{1}\hat{B}_{0})\hat{C}_{0}+(\hat{A}_{0}\hat{B}_{0}-\hat{A}_{1}\hat{B}_{1})\hat{C}_{1},\\
\hat{F}_3'&=&(\hat{A}_{0}\hat{B}_{1}+\hat{A}_{1}\hat{B}_{0})\hat{C}_{1}-(\hat{A}_{0}
\hat{B}_{0}-\hat{A}_{1}\hat{B}_{1})\hat{C}_{0}.
\end{eqnarray*}
we have $\hat{X}\hat{Y}=-\hat{Y}\hat{X}$ and thus
\begin{eqnarray*}
\hat{F}_3^2&=&(\hat{X}\hat{C}_{0}+\hat{Y}\hat{C}_{1})^2\\
&=&\hat{X}^2\hat{C}_{0}^2+\hat{Y}^{2}\hat{C}_{1}^{2}+\hat{X}\hat{Y}\hat{C}_{0}\hat{C}_{1}+\hat{Y}\hat{X}\hat{C}_{1}\hat{C}_{0}\\
&=& 4(\openone+\bar a\bar b\sigma^{(1)}_{z}\sigma^{(2)}_{z})+\hat{X}\hat{Y}[\hat{C}_{0}, \hat{C}_{1}] \nonumber \\
&=& 4(\openone +\bar a\bar b\sigma^{(1)}_{z}\sigma^{(2)}_{z})+([\hat{A}_{1},\hat{A}_{0}]+[\hat{B}_{1}, \hat{B}_{0}])[\hat{C}_{0} \hat{C}_{1}]\nonumber \\
&=& 4(\openone +\bar a\bar b\sigma^{(1)}_{z}\sigma^{(2)}_{z}+ \bar a\bar c\sigma^{(1)}_{z}\sigma^{(3)}_{z}+\bar b\bar c\sigma^{(2)}_{z}\sigma^{(3)}_{z}),
\end{eqnarray*}
 {where $X,Y$ deal with measurements on the first and second partite and $C$ deal with the third partite.} They are commuting.  It follows
\begin{eqnarray*}
\langle \hat{F}_3\rangle  \leq 2\sqrt{1+\bar a\bar b+\bar a\bar c+\bar b\bar c}. 
\end{eqnarray*}

ii) For a (21)-separable state, e.g., in a partition $AB|C$, namely,  $\rho=\rho^{(AB)}\otimes \rho^{(C)}$ we have
\begin{eqnarray*}
F_3&=&XC_{0}+YC_{1}\\
&\leq &\sqrt{2(1+c_{ab}^+) C^2_{0}+2(1+c_{ab}^-)C_{1}^2}\\
&=&\sqrt{2(1+\bar a\bar b) (C^2_{0}+C^{2}_{1})+2ab(C_{0}^2-C_{1}^2)}\\
&\le&\sqrt{2(1+\bar a\bar b)(1+c\cos 2v_3)+2ab\bar c\sin 2v_3}\\
&\le &\sqrt{2(1+\bar a\bar b)+2\sqrt{(c+\bar a\bar b c)^2+a^2b^2\bar c^2}}\\
&=&\sqrt{2(1+\bar a\bar b)+2\sqrt{(1+\bar a\bar b )^2c^{2}_{3}+a^2b^2\bar c^2}}\\
&=&\sum_\pm\sqrt{1+\bar a\bar b)\pm(\bar a+\bar b)\bar c}\\
&:=&\beta^{(21)}_{\rm M}(ab|c),
\end{eqnarray*}
where we have used  Eq.(\ref{sqrtwo}) and Schwarz inequality  in the first inequality, and single-qubit uncertainty relation in the second inequality and Schwarz inequality  in the third inequality.  As the upper bound $\beta_t(ab|c)$ is independent of the underlying state, it holds true also for a general $AB|C$ state.  

iii) For a full separable state $\rho=\rho_A\otimes\rho_B\otimes\rho_C$ we have
\begin{eqnarray*}
F_3&=&X C_{0}-Y C_{1}=R\cdot \left(X\cos\alpha+Y\sin\alpha\right)\le R \sum_\pm\sqrt{\frac{ 1\pm \bar a\bar{b}+ab\cos2\alpha}2},
\end{eqnarray*}
where
$R=\sqrt{ C_{0}^2+ C^2_{1}},\quad \cos\alpha=\frac{ C_0}{R},\quad \sin\alpha=\frac{ C_1}{R}$, and we have used 
   Eq.(\ref{22})   in the first inequality.  It follows from the constraint of uncertainty relation Eq(\ref{singl1}, \ref{singl2})  that 
\begin{eqnarray}
R^{2}&=&C^{2}_{0}+C^{2}_{1}=r^{2}_{3} ( 1+c\cos 2v_3),\\
R^2\cos 2\alpha &=&C_{0}^2- C^2_{1} 
= r^{2}_{3}\bar c \sin 2v_3. 
\end{eqnarray}
When the maximum value is obtained  $r_{3}=1$,   we finally arrive at Eq.(\ref{ones}).  

\subsubsection{Trade-off between  ${\rm U}$ and  ${\rm U}_{(21)}$}
First we consider a Bell's value that can verify genuine tripartite entanglement. We first  give the trade-off between  $\rm U$ and  ${\rm {U}}_{(21)}$.   By submitting    $\bar a\bar c+\bar b\bar c={{\rm U}^{2}}/{4}-\bar{a}\bar{b}-1$ with ${{{\rm U}^2}}/{4}\geq \bar{a}\bar{b}+1$  into Eq.(\ref{twoone}),   we have 
\begin{eqnarray}
f_{(21)}(v)&\equiv&\max_{{\rm U}(\Omega)=v} {\rm U}_{(21)}(\Omega)\nonumber \\
                         &= & \textstyle \max_{{\rm U}(\Omega)=v}\{\frac{v}{2}+\sqrt{2(1+\bar a\bar b)-\frac{v^{2}}{4}} \}\nonumber \\
 &= & \textstyle \max_{{\rm U}(\Omega)=v}
\{\frac{v}{2}+\sqrt{4-\frac{v^{2}}{4}} \}\nonumber \\
 &=&  \textstyle \frac{v}{2}+\sqrt{4-\frac{v^{2}}{4}},
\end{eqnarray}
where the maximum in the second equality  is taken when $1+\bar{a}\bar{b}=2$.   We do not obtain a closed form for $f_{(111)}. $ 
  {In the following we have shown these two structure functions}

\subsection{A.4   $n-$partite Bell's inequalities. }

\subsubsection{Preparation}

In general, the MK polynomial is defined recursively   with  $\hat{F}_n=\hat{F}_{n-1}\frac{\hat{A}_n+\hat{A'}_n}2+\hat{F}_{n-1}'\frac{\hat{A}_n-\hat{A'}_n}2,\quad \hat{F}_n'=\hat{F}_{n-1}'\frac{\hat{A}_n+\hat{A}'_n}2-\hat{F}_{n-1}\frac{\hat{A}_n-\hat{A}'_n}2$, where  $\hat{A}_{n}$ and $\hat{A}'_{n}$ are measurements on $n-$th side.  Specifically, when $n=2$, we have  $\hat{F}_2=\hat{X}+\hat{Y},\quad \hat{F}_2'=\hat{X}-\hat{Y}$, and $\hat{F}'_{2}$ is the  CHSH inequality.  In the following, we provide a universal method for computing the bounds for   $\hat{F}_{n}+\hat{F}'_{n}$,  $\hat{F}_{n}(\hat{F}'_{n})$  relevant to various kinds of entangled states. 

{\bf  Lemma }{The recursively  defined quantities 
$$\hat{F}_{n, \pm}=\frac{\hat{F}_n\pm \hat{F}'_n}{2}$$  Lemma. It holds 
\begin{eqnarray}
\hat{F}_n^2=\hat{F}_n'^2=4\hat \delta_n^+,\quad \hat{F}_n\hat{F}_n'=4(\epsilon_n+i\hat \delta_n^-),\quad \hat{F}_{n,\pm}^2=2(\hat\delta_n^+\pm \epsilon_n)
\end{eqnarray}
where, with $\hat{A}_n\hat{A}_n'=a_n+i\bar a_n\hat{Y}_n$, 
$$\hat \delta_n^\pm=\frac 12\prod_{j=1}^n(1+\bar a_j\hat{Y}_j)\pm\frac12\prod_{j=1}^n(1-\bar a_j\hat{Y}_j),\quad \epsilon_n=\prod_{j=1}^n a_j$$}

\begin{proof}
\begin{itemize}
\item Proof  of  $\hat{F}_n^2=\hat{F}_n'^2=4\hat \delta_n^+$

 { This is proved  by induction. It is true for $n=2$ as $\hat{F}_{2, +}=\hat{X}$ and $\hat{F}_{2, -}=\hat{Y}$ and $\hat\delta_2^+=1+\bar a\bar b\hat{Y}_A\hat{Y}_B$, $\hat\delta_2^-=\bar a\hat{Y}_A+\bar b\hat{Y}_B$ while $\epsilon_2=ab$. Suppose it is true for $n-1$ and for $n$ we have
\begin{eqnarray*}
\hat{F}_n^2&=&\hat{F}_{n-1,+}^2+\hat{F}_{n-1,-}^2+\hat{F}_{n-1,+}\hat{F}_{n-1,-}\hat{A}_n\hat{A}'_n+\hat{F}_{n-1,-}\hat{F}_{n-1,+}\hat{A}'_n\hat{A}_n\\
&=&4\hat\delta_{n-1}^++(2i\delta^{-}_{n-1} )(\hat{A}'_n\hat{A}_n-\hat{A}_n\hat{A}'_n)\\
&=&4\hat\delta_{n-1}^++4\hat\delta^-_{n-1}\bar a_n\hat{Y}_n\\
&=&4\hat\delta_n^+
\end{eqnarray*}
where  we have used 
$\hat{F}_{n-1,+}\hat{F}_{n-1,-}=\frac{1}{4}(\hat{F}^{2}_{n-1}-\hat{F}'^{2}_{n-1}+\hat{F}'_{n-1}\hat{F}_{n-1}-\hat{F}_{n-1}\hat{F}'_{n-1}) =-2i\delta^{-}_{n-1}$ and 
$\hat{F}_{n-1,-}\hat{F}_{n-1,+}=2i\delta^{-}_{n-1}$,  $\hat{A}_{n-1}\hat{A}_{n-1}'=a_{n-1}+i\bar a_{n-1}\hat{Y}_{n-1}$ and $\hat{A}'_{n-1}\hat{A}_{n-1}=a_{n-1}-i\bar a_{n-1}\hat{Y}_{n-1}$, and $\hat{\delta}^{+}_{n}=\frac{1}{2}(\hat{\delta}^{+}_{n-1}+\hat{\delta}^{-}_{n-1})(1+\bar a_{n} \hat{Y}_{n})+\frac{1}{2}(\hat{\delta}^{+}_{n-1}-\hat{\delta}^{-}_{n-1})(1-\bar a_{n} \hat{Y}_{n})=\hat{\delta}^{+}_{n-1}+\hat{\delta}^-_{n-1}\bar a_{n} \hat{Y}_{n}$. }
\item {\rm Proof $\hat{F}_n\hat{F}_n'=4(\epsilon_n+i\hat \delta_n^-)$} 
By the definition  {$\hat{F}_n=\hat{F}_{n-1}\frac{\hat{A}_n+\hat{A'}_n}2+\hat{F}_{n-1}'\frac{\hat{A}_n-\hat{A'}_n}2$,  and  $  \hat{F}_n'=\hat{F}_{n-1}'\frac{\hat{A}_n+\hat{A}'_n}2-\hat{F}_{n-1}\frac{\hat{A}_n-\hat{A}'_n}2,$ we have 
\begin{eqnarray*}
\hat{F}_n\hat{F}_n'=&&\frac{1}{4}[\hat{F}^{'2}_{n-1}({\hat{A}_n-\hat{A'}_n})({\hat{A}_n+\hat{A'}_n})-\hat{F}^{2}_{n-1}({\hat{A}_n+\hat{A'}_n})({\hat{A}_n-\hat{A'}_n})\\
&&-\hat{F}_{n-1}'\hat{F}_{n-1}(\hat{A}_n-\hat{A'}_n)(\hat{A}_n-\hat{A'}_n)+\hat{F}_{n-1}\hat{F}'_{n-1}(\hat{A}_n+\hat{A'}_n)(\hat{A}_n+\hat{A'}_n)]\\
=&&\frac{1}{4}[2\hat{F}^{'2}_{n-1}[{\hat{A}_n, \hat{A'}_n}]-\hat{F}_{n-1}'\hat{F}_{n-1}(2\openone-\hat{A}_{n}\hat{A}'_{n}-\hat{A}'_{n}\hat{A}_{n})\\
&&+\hat{F}_{n-1}\hat{F}'_{n-1}(2+\hat{A}_n\hat{A'}_n+\hat{A}'_n\hat{A'}_n)]\\
=&&4\hat{\delta}^{+}_{n-1}(ia_{n}Y)+(\epsilon_{n-1}+i\hat \delta_{n-1}^-)(2\openone+a_{n})-(\epsilon_{n-1}-i\hat \delta_{n-1}^-)(2\openone-a_{n})\\
=&&4\hat{\delta}^{+}_{n-1}(ia_{n}Y)+4\epsilon_{n}+4i\delta_{n-1}^-\\
=&& 4(\epsilon_n+i\hat \delta_n^-)
\end{eqnarray*}
where  we have used  $\hat{\delta}^{-}_{n}=\frac{1}{2}(\delta^{+}_{n-1}+\delta^{-}_{n-1})(1+\bar a_{n} \hat{Y}_{n})-\frac{1}{2}(\delta^{+}_{n-1}-\delta^{-}_{n-1})(1-\bar a_{n} \hat{Y}_{n})=\hat{\delta}^{-}_{n-1}+\hat{\delta}^{+}_{n-1}\bar{a}_{n}\hat{Y}_{n}$.} 

\item The proof of  $\hat{F}_{n,\pm}^2=2(\hat\delta_n^+\pm \epsilon_n)$ is straightforward.
\end{itemize}
 
\end{proof}

\subsubsection{Main results }
In a general state, it holds
\begin{eqnarray}
\langle \hat{F}_n\cos t+\hat{F}_n'\sin t\rangle^2&\le& \langle (\hat {F}_n\cos t+\hat {F}_n'\sin t)^2\rangle \nonumber\\
&=& 4(\langle\hat\delta_n^+\rangle+\epsilon_n\sin2t)\nonumber \\ 
&\le &4(\delta_n+\epsilon_n\sin 2t) \label {gel}
\end{eqnarray}
for any $t$, where $$\delta_n=\frac12\prod_{j=1}^n(1+\bar a_j)+\frac12\prod_{j=1}^n(1-\bar a_j).$$
and  {we have used $\hat{F}_n^2=\hat{F}_n'^2=4\hat \delta_n^+,\quad \hat{F}_n\hat{F}_n'=4(\epsilon_n+i\hat \delta_n^-),$  and $\hat{F}'_n\hat{F}_n=4(\epsilon_n-i\hat \delta_n^-)$.}

ii) as $\{\hat{F}_{n, +},\hat{F}_{n, -}\}=0$,  we have
$$\frac{\langle \hat{F}_{n,+}\rangle^2}{\delta_n+\epsilon_n}+\frac{\langle \hat{F}_{n,-}\rangle^2}{\delta_n-\epsilon_n}\le 2,$$
so that there exist $0\le r\le 1$ and $u$ such that
$$\langle \hat{F}_{n,+}\rangle=r\sqrt{2(\delta_n+\epsilon_n)}\cos u,\quad \langle \hat{F}_{n, -}\rangle=r\sqrt{2(\delta_n-\epsilon_n)}\sin u$$

\textbf{Symmetry.} For any subset of $k$ qubits, we have
$$\hat{F}_n=\frac12\hat{F}_k \hat{F}_{k',+}+\frac12 \hat{F}_k' \hat{F}_{k',-},\quad \hat{F}_n'=\frac12\hat{F}_k'\hat{F}_{k',+}-\frac12 \hat{F}_k \hat{F}_{k',-}$$
Thus, for any possible cut $\varrho=\rho_k\otimes\rho_{k'}$ 
\begin{eqnarray}
\langle\hat{F}_n\cos t+\hat{F}_n'\sin t\rangle_\varrho&=&\langle \hat{F}_{k', +}\rangle_{k'}\frac{\langle \hat{F}_{k}\cos t+\hat{F}_{k}'\sin t\rangle_{k'}}2+\langle \hat{F}_{k', -}\rangle_{k'}\frac{\langle \hat{F}_{k}'\cos t-\hat{F}_{k}\sin t\rangle_{k'}}2  \nonumber \\
&\le&\sqrt{\frac{\delta_{k'}+\epsilon_{k'}}2F_k(t)^2+\frac{\delta_{k'}-\epsilon_{k'}}2F_k'(t)^2} \nonumber \\
&=&\sqrt{\delta_{k'} \frac{F_k^2+F_k'^2}2+\epsilon_{k'} \frac{(F_k^2-F_k'^2)\cos2t+2F_kF_k'\sin 2t}2}  \nonumber  \\
&=&\sqrt{\delta_{k'} (F_{k,+}^2+F_{k,-}^2)+\epsilon_{k'} (2F_{k,+}F_{k,-}\cos2t+(F_{k,+}^2-F_{k,-}^2)\sin 2t)} \nonumber  \\
&=&\sqrt{2\delta_{k'} (\delta_k,+\epsilon_k\cos 2u)+2\epsilon_{k'} (\sqrt{\delta_k^2-\epsilon_k^2}\sin2u\cos2t+(\delta_k\cos 2u+\epsilon_k)\sin 2t)}  \nonumber   \\
&\le&\sqrt{2(\delta_{k'} \delta_k,+\epsilon_{k'}\epsilon_k\sin 2t)+2\sqrt{(\delta_k\epsilon_{k'}\sin2t+\delta_{k'}\epsilon_k)^2+\epsilon_{k'}^2(\delta_k^2-\epsilon_k^2)\cos^22t}}  \nonumber \\
&=&\sqrt{2(\delta_{k'} \delta_k,+\epsilon_{k'}\epsilon_k\sin 2t)+2\sqrt{(\delta_{k'} \delta_k,+\epsilon_{k'}\epsilon_k\sin 2t)^2-(\delta_k^2-\epsilon_k^2)(\delta_{k'}^2-\epsilon_{k'}^2)}} \label{kkkt}
\end{eqnarray}

\subsubsection {  Qutrit-case:  Bounds  of  $F_{3}$ and $F_{3}+F'_{3}$  }

\emph{A general state:}
We first consider $F_{3}$ let $t=0$  using  Eq.{\ref{gel}}.  For a tri-partite system, we let  $n=3$, $k=3$. We have the relation: 
$$F_{3}\leq \sqrt{2\delta_{ 3} }  =2\sqrt{1+\bar a\bar b+\bar a\bar c+\bar b\bar c}.$$
Consider the $F_{3}+F'_{3}$, we have let $t=\frac{\pi}{4}$, we have the upper bound as 
$$F_{3}+F'_{3}\leq 2\sqrt{\delta_3+\epsilon_3}=2\sqrt{2(1+\bar a\bar b+\bar a\bar c+\bar b\bar c+abc)}.$$
\emph{Partially entangled state:} In this case, we let  $k=2$, and $k'=1$ and use Eq.{\ref{kkkt}}
\begin{eqnarray}
\langle\hat{F}_3\rangle_{(21)} &\le &\max_{(a,b,c)} \sqrt{2\delta_{1} \delta_2+2\sqrt{\delta^{2}_{1} \delta^{2}_2-(\delta_2^2-\epsilon_2^2)(\delta_{1}^2-\epsilon_{1}^2)}} \nonumber \\
&=&\max_{(a,b,c)}\sqrt{2(1+\bar a\bar b)+2\sqrt{(1+\bar a\bar b)^2-(\bar a+\bar b)^2\bar c^2}}({\ge 2})
\end{eqnarray}
and the  maximum bound for bipartition separable state  $C|AB$ is 
\begin{eqnarray}
\langle F_{(3)}+F'_{(3)}\rangle_{(21)}=\sqrt{2(\delta_{1}+\epsilon_{1})(\delta_{2}+\epsilon_{2}) }=2\sqrt{(1\pm c)(1+\bar a\bar b\pm ab)}. 
\end{eqnarray}
\emph{Fully separable state:}
 We consider $t=\frac \pi4$ and use the bounds of  $F_{2}$, $F'_{2}$ dealing with separable states in Eq.(\ref{kkkt}),   which leads to 
$$\langle \hat{F}_3+\hat{F}_3'\rangle_{(111)}\le \sqrt 2\sqrt{1+abc+(c+ab)z+\sqrt{(1+abc+(c+ab)z)^2-\bar a^2\bar b^2(1+cz)^2}}:=\sqrt{2f(z)}.$$
For convenience, we denote $u=a^2+b^2+c^2$, $v=a^2b^2+a^2c^2+b^2c^2$, and $w=abc$ and
$$R=\{(a,b,c)\mid a^2b^2+a^2c^2+b^2c^2-a^2b^2c^2+2abc\sqrt{a^2+b^2+c^2+2abc}\le 0\}.$$
\begin{figure}[ht]
    \centering
    \includegraphics[height=9cm]{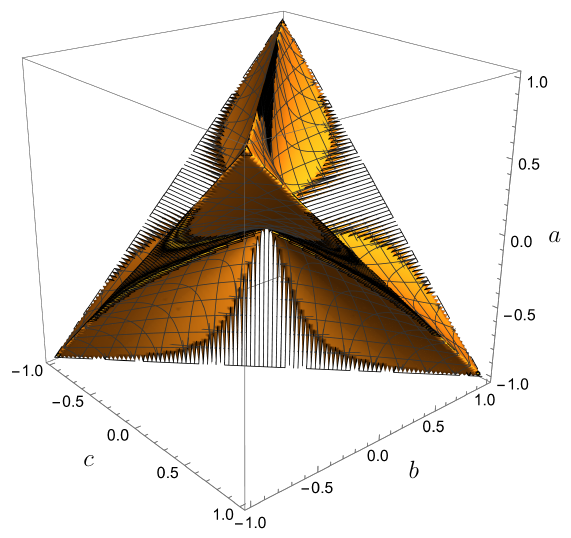}
    \caption{Depending on whether the parameters \(a, b, c\) belong to the set $R=\bigl\{(a,b,c) \mid a^2b^2+a^2c^2+b^2c^2-a^2b^2c^2+2abc\sqrt{a^2+b^2+c^2+2abc}\le 0\bigr\},$
the upper bound for \(\langle\hat{F}_3+\hat{F}_3'\rangle_{(111)}\) takes different expressions.}
\end{figure}

Then the upper bound reads
$$\langle\hat{F}_3+\hat{F}_3'\rangle_{(111)} \le \left\{\begin{array}{ll}\displaystyle\max_{\mu,\nu=\pm}\sqrt{2(1+\mu a)(1+\nu b)(1+\mu\nu c)}&(a,b,c)\not\in R \\ 
\displaystyle2\bar a\bar b\bar c\left(2+\frac{ab}c+\frac{ac}b+\frac{bc}a-abc\right)^{-\frac12}&(a,b,c)\in R\end{array}\right.$$
\begin{itemize}
\item Observation: $a^2b^2+a^2c^2+b^2c^2-a^2b^2c^2=a^2+b^2+c^2-1+\bar a^2\bar b^2\bar c^2\ge u-1$. If $(a,b,c)\in R$ then $w\le0$
$$(\sqrt{u+2w}-{|w|})^2-(1-{|w|})^2\le 0$$
which infers $\sqrt{u+2w}\le 1$ yielding $v-w^2+2w\le0$ in the case of $w<0$ which means $2-w+v/w\ge0$.
\item Proof. We denote
$$ f(z)=L+Kz+\sqrt{(L+Kz)^2-p^2(1+cz)^2}=L+Kz+\sqrt{\alpha+2\beta z+\gamma z^2}$$
with $L=1+abc, K=c+ab$, and $p=\bar a\bar b$ and $\alpha=L^2-p^2,\beta=KL-p^2c$, and $\gamma=K^2-p^2c^2$. As $f''(z)\propto \alpha\gamma-\beta^2=-p^2(k,-c L)^2\le 0$ we have that global maximum is attained at $$0=f'(z)=k,+\frac{\beta+\gamma z}{\sqrt{\alpha+2\beta z+\gamma z^2}}\Longrightarrow z_*=\frac{-\beta-K(1-c^2)\frac{|ab|}{|c|}}\gamma.$$
We see that $|z_*|\le 1$ iff $(a,b,c)\in R$ in which it holds $abc<0$. Otherwise the maximum is attained at the boundary $z=\pm1$.
\end{itemize}
In the case of $F_3$ we have
\begin{eqnarray*}
\langle \hat{F}_3\rangle_{(111)}&\le& \max_u\sum_\pm\sqrt{\frac{(1\pm\bar a\bar b)(1+c\cos u)+ab\bar c\sin u}2}
\end{eqnarray*}

The trade-offs between these upper bounds for   $\hat{F}_{3}+\hat{F}'_{3}$, defined as  $f_{(21)}(v) \equiv \max_{\{\Omega | {\rm U}(\Omega) = v\}} {\rm U}_{(21)}(\Omega)$ and  $f_{(111)}(t) \equiv \max_{\{\Omega | {\rm U}(\Omega) = v\}} {\rm U}_{(111)}(\Omega)$
for rank one projective measurements,  are shown in the following figture.  To extend these trade-offs to general measurements, we  observe that all three bounds remain invariant under sign changes of parameters $a, b, c$ when $|a|=|b|=|c|=1$. Through the convexity and monotonicity properties of $f_{(21)}$ and $f_{(111)}$ established previously,  {it follows from  Eq.(1) in maintext  that } these trade-off relations hold for arbitrary measurement schemes.
\begin{figure}
\begin{center}
\includegraphics[width=0.400\textwidth]{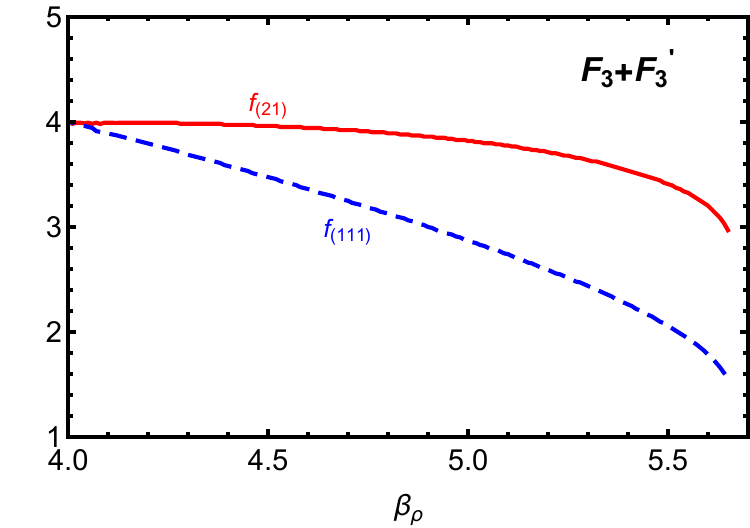}
\end{center}
\caption{Structure functions for Bell's inequality ${F}_{3}+F'_{3}$  for bipartition separable state and fully separable state. With devices that can generate a Bell value $\beta_{\rho}$, the strengthened  bound for bipartite separable  states is given as $f_{(21)}$ (red solid curve) and the bound for fully separable states is given  as $f_{(111)}$ (blue dashed curve). Any correlation exceeding these bounds can certify tripartite genuine entanglement and not fully separable states.}\label{fig1}
\end{figure}

\subsection{A.5 Bounds of general Bell's inequalities for qubit systems } 
 {In the main text, we have mainly focused on Bell inequalities that deal with full correlations, and we have derived analytic expressions for the trade-off between the bounds for separable states and those for general quantum states. For more general Bell inequalities, which may contain marginal terms and do not assume such a high symmetry, analytic results are not easy to obtain. This does not necessarily imply, however, that Bell inequalities cannot be tightened for separable states. We provide a numerical approach to explore this aspect. The key ingredient remains to derive upper bounds for entangled states and separable states using projective measurements, and then to explore the trade-off between them. If the relation ${\rm U}_{\varrho}(\Omega^{*}_{i}) \leq f(\max_{\pm} {\rm U}(\Omega_{i, \{\pm\}}))$ holds, then one can tighten the bounds for separable states. }

 {First, we consider measurements that are rank-one projective, as specified by the set $\Omega$. We then calculate the upper bounds for an entangled state and a separable  state $\varrho$, denoted by $\rm U$ and ${\rm U}_{\varrho}$ respectively.
Our main tool remains uncertainty relation Eq.(\ref{UR}), which can be re-expressed as  
$$A^{2}_{0}+A^{2}_{1}+a^{2}-2aA_{0}A_{1}\leq 1.$$
 Specifically, consider a general Bell scenario. The experimental statistics are described by the set}
\[
\mathbf{p}_{\mathbf{s}} \equiv \left\{ p(\mathbf{a}|\mathbf{s}) = {\rm Tr}\left( \rho \bigotimes_{l=1}^n M^{a'_l}_{s_l} \right) \right\}_{\mathbf{a}},
\]
 {where $M^{a_m}_{s_m}$ denotes the measurement operator for outcome $a_m$ given setting $s_m$ on the $m$-th subsystem. Here, $\mathbf{s} \equiv (s_1, \dots, s_n)$ represents the settings and $\mathbf{a} \equiv (a'_1, \dots, a'_n)$ the outcomes. One can interpret the experimental statistics as follows. Conditioning on the setting $\mathbf{s}_{\bar{m}} \equiv (\dots, s_l, \dots)_{l \neq m}$ and the outcomes $\mathbf{a}_{\bar{m}} \equiv (\dots, a'_l, \dots)_{l \neq m}$ of the other $n-1$ parties, a state}
\[
\rho_{\mathbf{a}_{\bar{m}}|\mathbf{s}_{\bar{m}}} = \frac{ {\rm Tr} \left[ \rho \bigotimes_{l \neq m} M^{a'_l}_{s_l} \right] }{ p(\mathbf{a}_{\bar{m}}|\mathbf{s}_{\bar{m}}) }
\]
is prepared on the $m$-th party with probability $p(\mathbf{a}_{\bar{m}}|\mathbf{s}_{\bar{m}}) = \sum_{a'_m} p(a'_m, \mathbf{a}_{\bar{m}} \mid s_m, \mathbf{s}_{\bar{m}}).$ On this local state, performing a binary measurement $\hat{A}_{s_m}$ yields the expectation value
\[
A_{s_m | \rho_{\mathbf{a}_{\bar{m}}|\mathbf{s}_{\bar{m}}}} = \frac{ p(0, \mathbf{a}_{\bar{m}} \mid s_m, \mathbf{s}_{\bar{m}}) - p(1, \mathbf{a}_{\bar{m}} \mid s_m, \mathbf{s}_{\bar{m}}) }{ p(\mathbf{a}_{\bar{m}}|\mathbf{s}_{\bar{m}}) }.
\]
 {Thus, the correlation then is linked to the local statistics of measurements \(\hat{A}_{1}\) and \(\hat{A}_{0}\) on the conditional states \(\rho_{\mathbf{a}_{\bar{m}}|\mathbf{s}_{\bar{m}}}\), all of which are governed by uncertainty relations. }

 {Let us consider the bound for a separable state, which can be assumed to take the form $\varrho=\rho_{k} \otimes \rho_{k'}$. For such a product state, the probability distribution factorizes as $p(\mathbf{a}|\mathbf{s}) = p(\mathbf{a}_{(k)}|\mathbf{s}_{(k)}) p(\mathbf{a}_{(k')}|\mathbf{s}_{(k')})$. By applying the above constraints to $p(\mathbf{a}_{(k)}|\mathbf{s}_{(k)})$ and $p(\mathbf{a}_{(k')}|\mathbf{s}_{(k')})$, one can use this approach to calculate the upper bounds for a general state $\rho$ and the separable state $\varrho$, denoted by ${\rm U}(\Omega)$ and ${\rm U}_{\varrho}(\Omega)$ respectively.}

 {Second,    the  trade-off between the bounds for separable state and general state $f$ for projection measurement can also be calculated as 
$$f_{\varrho}(v)=\max_{\{\Omega|{\rm U}(\Omega)=v\}}{\rm U}_{\varrho}(\Omega) $$ when  $f$ is concave  and monotonically decreasing and holds ${\rm U}_{\varrho}(\Omega^{*}_{i}) \leq f(\max_{\pm} {\rm U}(\Omega_{i, \{\pm\}}))$,  the trade-off stands also for general measurement settings.  One thus has  ${\rm U}_{\varrho}(\Omega')\leq  f({\rm U}(\Omega'))$ by the argument in main text. }

\section{B.  NPA hierarchy}

 { The NPA hierarchy is a sequence of tests to verify whether a correlation can be generated by a quantum system. It considers a set $\mathcal{O}=\{O_{e}\}$ of operators, where each $O_{e}$ specifies a measurement operator, or a combination or product of such operators. Consider an arbitrary linear combination $W=\sum_{e}\alpha_{e}O_{e}$ with coefficients $\mathbf{C}=(\alpha_{1}, \cdots, \alpha_{n'})^T$. It follows that $\mathrm{Tr}(\rho W^{\dagger} W)=\mathbf{C}^\dagger\Gamma \mathbf{C}\geq 0$ for any $\mathbf{C}$, where the matrix elements are given by $\Gamma_{ef}= \mathrm{Tr}(\rho O^{\dagger}_{e}O_{f})$. As a result, the matrix $\Gamma$ must be positive semidefinite.}

 {Note that some entries of $\Gamma$ correspond to experimental statistics. For instance, if $O_{e}=A_{e}$ and $O_{f}=B_{f}$ are measurement operators for two different parties, then $\Gamma_{ef}=\langle A_{e} B_{f} \rangle$ can be measured in experiments. Some matrix elements $\Gamma_{ef}$ correspond to non-commuting operators. For example, for $A_{e}$ and $A_{f}$, one has $\Gamma_{ef}=\mathrm{Tr}(\rho A^{\dagger}_{e}A_{f})$, which does not correspond to any experimental statistic and is called a non-physical term. If a correlation arises from a quantum experiment, one can populate the physical terms in $\Gamma$ with the measured values and assign values to the non-physical terms such that the matrix is  semidefinite. Otherwise, if for a given set of physical terms placed in $\Gamma$, no assignment of values to the non-physical terms can make the matrix positive semidefinite, then the correlation cannot be realized by a quantum system.}  
 
 {To further  verify whether a correlation can be generated with separable state, one can use structure specific to separable states. 
 First, when state is product $\varrho=\rho^{(A)} \otimes \rho^{(B)}$, one can   define $\mathcal{O}$ as a set of $\{O^{(A)}_{e} \otimes O^{(B)}_{f}\}$, with $O^{(A)}$ is  relevant to Alice's system, and $O^{(B)}_{f}$ for Bob's side. Then, one can define a  matrix $\Gamma$ with entries  
 \begin{eqnarray*}
\Gamma_{ee', ff'} &=& {\rm Tr}[\varrho (O^{(A)\dagger}_e O^{(A)}_{e'} + O^{(A)\dagger}_{e'} O^{(A)}_e)\otimes(O^{(B)\dagger}_f O^{(B)}_{f'} + O^{(B)\dagger}_{f'} O^{(B)}_f)]\\
&=&  {\rm Tr}[\rho^{(A)} (O^{(A)\dagger}_e O^{(A)}_{e'} + O^{(A)\dagger}_{e'} O^{(A)}_e)]\otimes{\rm Tr}[\rho^{(B)}(O^{(B)\dagger}_f O^{(B)}_{f'} + O^{(B)\dagger}_{f'} O^{(B)}_f)]\nonumber \\
&=&\Gamma^{(A)}_{e'e}\Gamma^{(B)}_{ff},
\end{eqnarray*}
where $ee'$ and $ff'$ specify the index of lines and columns of this entry in the matrix.  Then $\Gamma=\Gamma^{(A)}\otimes \Gamma^{(B)} \ge 0$ as $\Gamma^{(A), (B)}$ are semi-definite. However, the decomposition in the second step does not apply to entangled state, and $\Gamma$ while can be well-defined may be not semi-definite.  
Thus,   if correlation can be generated with separable  state, one put the physical terms in the matrix and  assign some number to  the non-physical terms to ensure that this matrix are semi-definite.  Other wise, it cannot be generated using separable state.} 

  \emph{An example:}

 {To demonstrate that the above approaches enables the verification of entanglement using local correlations, we provide an example in the minimal Bell scenario. Here, Alice and Bob perform orthogonal, rank one projective measurements specified by the parameter set $\Omega = \{0, 0\}$. We consider a correlation of the form:} \[
p(a'b'\mid \nu \mu) = \frac{1 + \lambda (-1)^{a'+b'+ \nu \mu}}{4}.
\]
 {(This correlation  can be achieved with state $\rho_{\lambda}=\lambda |\Phi\rangle \langle \Phi|+(1-\lambda )\openone/4 $  with  $|\Phi\rangle =\sqrt{2}/2(|00\rangle +|11\rangle )$ using measurements $(\sigma_{x}\pm \sigma_{z})/\sqrt{2}$.)  The corresponding Bell value for the CHSH inequality is \(4\lambda\).   The correlation becomes nonlocal when \(\lambda \geq \frac{\sqrt{2}}{2}\). By applying the method described in this section and substituting the relevant parameters into the modified NPA matrix---specifically, \(r^{*}_{\nu} = r^{*}_{\mu} = 0\), \(r_{\nu} = r_{\mu} = 1\), and \(a = b = 0\)---we find that when  \(\lambda \geq 0.447\) or $\beta_{\rho_{\lambda}}\geq 1.788$,  one can not assign  some  numbers to the non-physical terms in $\Gamma$ as defined above  such that the matrix is semi-definite, and entanglement thus can be  certified. }

\end{document}